\DeclareMathOperator{\rect}{rect}
\DeclareMathOperator{\erfc}{erfc}
\DeclareMathOperator{\sinc}{sinc}
\DeclareMathOperator{\comb}{comb}
\newtheorem{theorem}{Theorem}
\newtheorem{lemma}[theorem]{Lemma}
\newtheorem{corollary}[theorem]{Corollary}
\newcommand{\ad}{\mathrm{ad}}
\renewcommand{\acomm}{\alpha_{\mathrm{comm}}}
\renewcommand{\abs}[1]{\left\lvert#1\right\rvert}
\title{All You Need is Trotter}
\author{Gumaro Rendon}
\email{gumaro@eter.one}
\date{November 1st, 2023}
\begin{document}

\begin{abstract}
    The work here enables a cost scaling of $\tilde{O}\left(T \,{\rm log}^2 (1/\epsilon)\right)$ and maximum depth $\tilde{O}\left(T\,{\rm log} (1/\epsilon)\right)$ for Hamiltonian evolution using no extra block-encoding qubits, where $\epsilon$ is the algorithmic error. This is achieved through product formulas, stable interpolation (Chebyshev), and to calculate the needed fractional queries, cardinal sine interpolation is used. This is an improvement over similar work using interpolation methods wich have a superlineal scaling in $T$.
\end{abstract}

\maketitle

\section{Introduction}

Fractional queries on a quantum computer are of great importance for quantum algorithms like Grover search and its variations. And they were difficult to come by up to 2006 according to Scott Aaronson's ``The ten most annoying questions in quantum computing"\cite{aaronson2006TenMostAnnoying}. They were later proven to be implementable by Sheridan et al. in ~\cite{Sheridan_fractional}. Latter, an exponential improvement was achieved in the query model in the work of~\cite{Cleve_2009}.
These techniques combined with linear combination of unitaries and amplitude amplification authors of  ~\cite{Berry_Hamiltonian_sim_2013}  achieve an exponential improvement on Hamiltonian simulation.

After this, there was a flurry of works using Quantum Signal Processing~\cite{Low_2017,low2018hamiltonian} which achieved exponential precision improvements on several operations like inversion of a matrix, Hamiltonian simulation, and also fractional queries ~\cite{gilyen_SVT}. There has been a recent paradigm shift in regards to what has to be done in a quantum computer and what can be relegated to classical post-processing~\cite{Rendon_extrapol} using the power of stable interpolation. Here, it is shown how to implement fractional queries by using cardinal sine interpolation which is closely related to the sampling theorem \cite{whittaker1915xviii,kotel2006transmission,nyquist1928certain,shannon1948mathematical}, while preserving optimal error scaling and requiring no signal processing on the quantum computer.

Moreover, fractional queries have emerged as a necessity when using well-conditioned interpolation methods for interpolating to zero Trotter-step size\cite{Rendon_extrapol} of the Hamiltonian evolution operator as feasible alternative to block-encoding. These methods are competitive in the sense that they accomplish ${\rm polylog(1/\epsilon)}$ scaling, where $\epsilon$ is the algorithmic error. The scaling achieved for the evolution operator with respect to the time evolution $t$ was $O(t^2)$, and thus here an attempt to improve this is made. 

In the next section (See \Cref{sec:setup}), I illustrate how the necessity of fractional queries emerge in the approach of interpolating away Trotter error in a stable manner. In the section after (\Cref{sec:fractional_queries}), I show how to obtain fractional queries through a classical addition of amplitudes of integer queries as a post-processing step using cardinal sine interpolation. Next (\Cref{sec:cheb_interp}), bounds are obtained on the convergence rate of Chebyshev interpolation of the Trotterized evolution operator assuming access to efficient fractional queries. Finally ( \cref{sec:combined_results}), the results are combined to obtain an algorithm that achieves quasi-linear cost-scaling with $t$ while keeping ${\rm polylog}(1/\epsilon)$.

The stability analysis for the cardinal sine interpolation is left to \Cref{app:sinc_stability} and the reader is pointed to \cite{Rendon_extrapol} for an analysis on the stability of Chebyshev interpolation.

\section{Setup of the problem\label{sec:setup}}
Suppose one wants to estimate:
\begin{align}
    \langle \psi_1 \vert \prod^{M}_{j} V_j e^{i T_j H_{j} } \vert \psi_2 \rangle,
\end{align}
where $V_i$ are unitary matrices.
Consider each Hamiltonian $H_j$ on a collection of $n$ qubits, can be decomposed into a sum of $m_j$ terms.
\begin{align}
    H_j = \sum^{m_j}_{\gamma=1} H_{j,\gamma}
\end{align}
Notice that both the Hamiltonian and decomposition are specified, as the decomposition is not unique. Throughout this work, $\sum_\gamma^{m_j}\|H_{j,\gamma}\|\leq 1$ is assumed.

In order to implement the unitary evolution operators
\begin{align}
    e^{i T_j H_{j} }
\end{align}
we wish to approximate them first through $p$-order product formulas.
Standard examples of product formulas include the Trotter first-order formula
\begin{align}
S_{1,j} (t) := \prod^{m_j}_{\gamma=1} e^{i H_{j,\gamma} t},
\end{align}
the second-order symmetric formula
\begin{align}
S_{2,j} (t) :=  e^{i  H_{j,1} t/2} \dots  e^{i  H_{j,m_j} t/2} e^{i H_{j,m-j} t/2} \dots  e^{i  H_{j,1} t/2},
\end{align}
and more generally, the order $2k$ symmetric Suzuki-Trotter formula, defined recursively as
\begin{align} \label{eq:symm_S2k}
S_{2k,j} (t) := [S_{2k-2,j}(u_k t)]^2 S_{2k-2,j}\left((1 - 4 u_k) t\right) [S_{2k-2,j}(u_k t)]^2
\end{align}
for every $k\in \mathbb{Z}_+\setminus\{1\}$, where $u_k := (4-4^{1/(2k-1)})^{-1}$~\cite{Suzuki1991GeneralTO}.

Thus, one calculates
\begin{align}
    \langle \psi_1 \vert \prod^{M}_{j} V_j S^{g/s}_{p,j}(s t_{j})  \vert \psi_2 \rangle = \langle \psi_1 \vert \prod^{M}_{j} V_j \left(e^{i(s t_j)\tilde{H}_{p,j}(s t_j) } \right)^{(g/s)}  \vert \psi_2 \rangle,
\end{align}
where 
$$
g=T_j/t_j,
$$
for certain values of $s$ and then interpolate to $s\to0$ to recover our desired result. Here,
\begin{align}
    \tilde{H}_{p,j}(s t_j) = \frac{\log\left(S_{p,j}(s t_j)\right)}{i s t_j },
\end{align}
where $\tilde{H}_{p,j}(s t_j)$ will be a single-valued analytic function of $s$ provided that
\begin{align}
      t_j   \leq \pi.
\end{align}
The values of $s$ chosen are those at optimal collocation of nodes, $s_k$, and set of interpolating polynomials such that the error does not blow up (See \cite{Rendon_extrapol,Gautschi90how(un)stable,Kuian_cheb,low2019well}). This, however, introduces the necessity of using fractional queries since
\begin{align}
    g/s_k
\end{align}
will not be in general an integer. This was an obstacle which authors in \cite{Rendon2021} ran into. Typically, these are assumed to be calculated through Quantum Signal Processing techniques (\cite{gilyen_SVT,Low_2017,Low2019hamiltonian} ). However, these require extra ancillary qubits and additional circuit depth overhead. For that reason, I propose to use close-to-optimal interpolation techniques by estimating nearby integer powers 
\begin{align}
    \langle \psi_1 \vert \prod^{M}_{j} V_j S^{m + n}_{p,j}(s_k t_{j})  \vert \psi_2 \rangle,
\end{align}
where $m=\lfloor g/s_k\rfloor$ for $n\in \{-q,-q+1,\dots,q\}$.

In the next section, I will detail how to obtain the fractional queries through cardinal sine interpolation and estimate upper-bounds on the resource requirements. Following that, I will estimate the uncertainty propagation to the interpolant due to cardinal sine interpolation. Finally, I explain the Trotter interpolation to the limit $s\to 0$.

\section{Fractional Queries using Cardinal Sine Interpolation\label{sec:fractional_queries}}

In this section, I will explain how one can use cardinal sine interpolation to obtain the fractional queries required by Trotter step interpolation. Here is the main theorem regarding fractional queries interpolation:

\begin{theorem}\label{thm:q}
In order to give an estimate of $\langle \psi_1 \vert \prod^{M}_{j} V_j S^{g/s_k}_{p,j}(s_k t_{j})  \vert \psi_2 \rangle$ within an error $\tilde\varepsilon_{\rm interp}$ using
\begin{align*}
     \frac{\sum^{q}_{n=-q}  \sinc\left(r-n\right) w(n)  \langle \psi_1 \vert \prod^{M}_{j} V_j S^{n+m}_{p,j}(s_k t_{j})  \vert \psi_2 \rangle }{w(r)},
\end{align*}
where $w(t) = \frac{1}{\sigma \sqrt{2\pi}}\exp\left(-\frac{t^2}{2{\sigma^2}}\right)$, $m=\lfloor g/s_k \rfloor$, and $r=g/s_k - \lfloor g/s_k \rfloor$, one needs to choose any
\begin{align*}
     q &\geq  \frac{6}{\pi}\left(\log\frac{ 4\sqrt{2\pi} + 8}{ \pi e^{1/3+\pi/12}  \tilde\varepsilon_{\rm interp}} \right).
\end{align*}
and enforce $\sum^{M}_{j=1} |t_j| \leq \pi $.
    
\end{theorem}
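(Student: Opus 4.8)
The plan is to recognize the target quantity as the value, at the non‑integer argument $g/s_k$, of one fixed entire function of the \emph{power} — a function of finite exponential type that is bounded on the real axis — so that the windowed‑$\sinc$ formula in the statement is exactly the truncated Gaussian‑regularized cardinal series for that function, whose error I would estimate by a residue (contour‑integral) argument. Concretely, set
\[
\phi(\ell):=\langle\psi_1\vert\prod^{M}_{j}V_j\,S^{\ell}_{p,j}(s_k t_j)\vert\psi_2\rangle
=\langle\psi_1\vert\prod^{M}_{j}V_j\,e^{\,i\ell B_j}\vert\psi_2\rangle ,
\qquad B_j:=s_k t_j\,\tilde H_{p,j}(s_k t_j),
\]
which is entire in $\ell$. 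From $\|e^{i(x+iy)B_j}\|=\|e^{-yB_j}\|\le e^{|y|\|B_j\|}$ we get $|\phi(x+iy)|\le e^{\Lambda|y|}$ with $\Lambda:=\sum_j\|B_j\|$, and $|\phi(x)|\le1$ for real $x$. The hypothesis $\sum_j|t_j|\le\pi$ forces $|s_k t_j|\le\pi$, so each $\tilde H_{p,j}(s_k t_j)$ is the single‑valued analytic branch discussed above and, using $\sum_\gamma\|H_{j,\gamma}\|\le1$ and the structure of the symmetric product formula, satisfies $\|\tilde H_{p,j}(s_k t_j)\|\le1$; hence $\Lambda\le\sum_j|s_k t_j|\le\sum_j|t_j|\le\pi$, and since the collocation nodes of the Chebyshev scheme are bounded away from $1$ one has $\Lambda\le\tfrac23\pi$, so the spectral gap $\delta:=\pi-\Lambda\ge\pi/3$. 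Write $g/s_k=m+r$, $m=\lfloor g/s_k\rfloor$, $r\in[0,1)$, and $\tilde\phi(x):=\phi(m+x)$, which again obeys $|\tilde\phi(x+iy)|\le e^{\Lambda|y|}$, $|\tilde\phi(x)|\le1$ on $\mathbb R$. If $r=0$ the estimator collapses to $\tilde\phi(0)=\phi(m)$ exactly, so assume $r\in(0,1)$; then the estimator is precisely $\hat\phi:=\tfrac1{w(r)}\sum^{q}_{n=-q}\sinc(r-n)\,w(n)\,\tilde\phi(n)$, the truncated Gaussian‑windowed cardinal interpolant of $\tilde\phi$ at $x=r$.

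Next I would write the error as a contour integral. Let $F(z):=\dfrac{\tilde\phi(z)\,w(z)}{(z-r)\sin(\pi z)}$; since $w$ is entire, $F$ has only simple poles, at $z=r$ with residue $\tilde\phi(r)w(r)/\sin\pi r$ and at each $z=n\in\mathbb Z$ with residue $(-1)^n\tilde\phi(n)w(n)/(\pi(n-r))=-\sinc(r-n)\,\tilde\phi(n)w(n)/\sin\pi r$, using $\sin\pi(r-n)=(-1)^n\sin\pi r$. Integrating $\tfrac{\sin\pi r}{2\pi i}F$ around the rectangle $\Gamma$ with vertical edges $\mathrm{Re}\,z=\pm(q+\tfrac12)$ and horizontal edges $\mathrm{Im}\,z=\pm Y$ then gives
\[
\tilde\phi(r)-\hat\phi=\frac{\sin\pi r}{2\pi i\,w(r)}\oint_{\Gamma}F(z)\,dz .
\]

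Then comes the estimate and the optimization. On $\Gamma$ use $|\tilde\phi(z)|\le e^{\Lambda|\mathrm{Im}\,z|}$, $|w(z)|=\tfrac1{\sigma\sqrt{2\pi}}e^{((\mathrm{Im}\,z)^2-(\mathrm{Re}\,z)^2)/(2\sigma^2)}$, $|\sin\pi z|=\cosh(\pi\,\mathrm{Im}\,z)$ on the vertical edges and $|\sin\pi z|\ge\sinh(\pi Y)$ on the horizontal ones, and $|z-r|\ge q-\tfrac12$ (resp.\ $\ge Y$) there. Choosing $Y:=\delta\sigma^2$, the off‑axis exponent $(\Lambda-\pi)|y|+y^2/(2\sigma^2)=-\delta|y|+y^2/(2\sigma^2)$ is $\le-\tfrac\delta2|y|$ for $|y|\le Y$, so the two vertical edges contribute $O\!\big(\tfrac1{\sigma\delta q}e^{-(q+1/2)^2/(2\sigma^2)}\big)$, while on the horizontal edges that exponent equals $-\delta^2\sigma^2/2$ and the residual Gaussian in $\mathrm{Re}\,z$ integrates to $\le\sigma\sqrt{2\pi}$, contributing $O\!\big(\tfrac1{\delta\sigma^2}e^{-\delta^2\sigma^2/2}\big)$. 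Multiplying by $\tfrac{|\sin\pi r|}{2\pi w(r)}$ and using $\tfrac1{w(r)}=\sigma\sqrt{2\pi}\,e^{r^2/(2\sigma^2)}\le\sigma\sqrt{2\pi}\,e^{1/(2\sigma^2)}$ (valid since $0\le r<1$) yields a bound of the shape
\[
|\tilde\phi(r)-\hat\phi|\le\frac{8\,e^{1/(2\sigma^2)}}{\pi\,\delta\,(q-\tfrac12)}\,e^{-(q+1/2)^2/(2\sigma^2)}+\frac{4\sqrt{2\pi}\,e^{1/(2\sigma^2)}}{\pi\,\delta\,\sigma}\,e^{-\delta^2\sigma^2/2}.
\]
Now take $\sigma^2:=q/\delta$, which equalizes both exponents at $q\delta/2$; with $\delta=\pi/3$ this is $\pi q/6$, and for $q\ge2$ the surviving prefactors — each $O(1/q)$ or $O(1/\sqrt q)$, times $e^{1/(2\sigma^2)}\le e^{\pi/12}$ — together with the $\pm\tfrac12$ offsets are dominated by the stated constant $\tfrac{4\sqrt{2\pi}+8}{\pi e^{1/3+\pi/12}}$. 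Hence $|\tilde\phi(r)-\hat\phi|\le\tfrac{4\sqrt{2\pi}+8}{\pi e^{1/3+\pi/12}}\,e^{-\pi q/6}$, and demanding this be at most $\tilde\varepsilon_{\mathrm{interp}}$ is exactly $q\ge\tfrac6\pi\log\tfrac{4\sqrt{2\pi}+8}{\pi e^{1/3+\pi/12}\tilde\varepsilon_{\mathrm{interp}}}$.

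The main obstacle I anticipate is getting a bound on $q$ that is \emph{independent of the Hilbert‑space dimension and of $M$}. The tempting Fourier route — represent $\tilde\phi$ by its (discrete) spectral measure and bound the aliasing by the total mass of that measure — fails, because for a product of $M$ matrix exponentials this mass can grow polynomially in the dimension; only the a priori data $|\phi|\le1$ on $\mathbb R$ and exponential type $\le\Lambda$ are dimension‑free, and the residue argument uses precisely these. Within that argument the delicate point is that the Gaussian window $w(z)$ \emph{grows} like $e^{(\mathrm{Im}\,z)^2/(2\sigma^2)}$ off the real axis, so the contour cannot be pushed to infinity; one must tie together the rectangle height $Y$, the width $\sigma$, and $q$ so that $-\delta|y|+y^2/(2\sigma^2)\le0$ on $\Gamma$ while $q^2/\sigma^2$ and $\delta^2\sigma^2$ are simultaneously large — which is what $Y=\delta\sigma^2$, $\sigma^2=q/\delta$ achieve. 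The two remaining structural inputs, $\|\tilde H_{p,j}(s_k t_j)\|\le1$ (so that $\Lambda\le\pi$) and the strict gap $\delta\ge\pi/3$ from the node placement, are what turn ``exponential type $\le\pi$'' into genuine geometric convergence in $q$.
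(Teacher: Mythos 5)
Your route is genuinely different from the paper's. The paper expands the amplitude in the joint eigenbasis of the effective Hamiltonians, writes the windowed signal as a superposition of Gaussian-windowed pure phases $x^{(\mu)}(t)$ with $|\mu|\le 1/4$, and bounds separately an aliasing (``spectral'') error via Poisson summation and a truncation error from the Gaussian tail, then balances the two exponentials and inverts with a Lambert-$W$ estimate to reach the stated $q\ge\frac{6}{\pi}\log(\cdot)$. You instead treat the amplitude as a single entire function of the power, of exponential type $\Lambda$ and bounded by $1$ on the real axis, and control the truncated Gaussian-windowed cardinal series by a residue/contour estimate; your residue identity and the edge bounds (with $Y=\delta\sigma^2$, $\sigma^2=q/\delta$) are sound. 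Your approach even buys something the paper's does not: it uses only $|\phi|\le 1$ on $\mathbb{R}$ plus the type bound, whereas the paper's step $\sum_{\vec\lambda}|o_{\vec\lambda}|\le 1$ fails for $M\ge 2$ (already for one qubit with a Hadamard placed between two Trotter factors the mass is $\sqrt{2}$), so the eigenbasis decomposition silently admits a dimension-dependent factor that your argument avoids.

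However, one of your steps is wrong as stated: the claim that the Chebyshev collocation nodes are bounded away from $1$, hence $\Lambda\le\frac{2}{3}\pi$ and $\delta\ge\pi/3$. The nodes are $s_i=\cos\left(\frac{2i-1}{2n}\pi\right)$, and $s_1\to 1$ as $n$ grows, so no such uniform bound exists; under the theorem's literal hypothesis $\sum_j|t_j|\le\pi$ your gap $\delta=\pi-\Lambda$ can be arbitrarily small and the rate $e^{-\delta q/2}$ degrades below the claimed $e^{-\pi q/6}$. The repair is to strengthen the time constraint rather than appeal to node placement: the paper's own proof in fact assumes $\frac{1}{2\pi}\sum_j t_j\le 1/4$, i.e. $\sum_j t_j\le\pi/2$ (this is also the condition in the final combined theorem), which gives $\Lambda\le\pi/2$ and $\delta\ge\pi/2>\pi/3$ for every node. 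Two smaller caveats: the bound $\|\tilde H_{p,j}(s_k t_j)\|\le 1$ is not immediate for higher-order Suzuki formulas, whose coefficient one-norm exceeds $1$ (the paper makes the same implicit assumption when asserting $|\mu_{\vec\lambda}|\le 1/4$), and your matching of the surviving prefactors to the specific constant $(4\sqrt{2\pi}+8)/(\pi e^{1/3+\pi/12})$ is asserted rather than verified, so as written you obtain the stated choice of $q$ only up to an additive $O(1)$.
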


To first prove this theorem, I define

\begin{align}\label{eq:xtilde}
    \tilde{x}(t) &=\langle \psi_1 \vert \prod^{M}_{j} V_j S^{t+m}_{p,j}(s_k t_{j})  \vert \psi_2 \rangle = \langle \psi_1 \vert \prod^{M}_{j} V_j \left(e^{i(s_k t_j)\tilde{H}_{p,j}(s_k t_j) } \right)^{t+m}  \vert \psi_2 \rangle.
\end{align}
We start by re-expressing the operators in terms of the eigen-states $\vert \lambda_j \rangle$
\begin{align}
    V_j = \sum_{\lambda_j,\lambda'_j} \vert \lambda_j \rangle  \langle \lambda_j \vert V_j \vert \lambda'_j \rangle  \langle \lambda'_j \vert,
\end{align}
\begin{align}
    e^{i(s_k t_j)\tilde{H}_{p,j}(s_k t_j) }  = \sum_{\lambda_j}\vert \lambda_j\rangle \langle \lambda_j \vert \exp\left( i (s_k t_j) E_{\lambda_j}\right),
\end{align}
where $\tilde{H}_{p,j}(s_k t_j) \vert \lambda_j \rangle = E_{\lambda_j} \vert \lambda_j \rangle$. I insert these expressions into \Cref{eq:xtilde} 

\begin{align}
    \tilde{x}(t) &= \langle \psi_1 \vert \prod^{M}_{j} V_j \left(e^{i(s_k t_j)\tilde{H}_{p,j}(s_k t_j) } \right)^{t+m}  \vert \psi_2 \rangle \cr 
    &=  \langle \psi_1 \vert\left(\prod_j \sum_{\lambda''_j,\lambda'_j} \vert \lambda''_j \rangle  \langle \lambda''_j \vert V_j \vert \lambda'_j \rangle  \langle \lambda'_j \vert\sum_{\lambda_j}\vert \lambda_j\rangle \langle \lambda_j \vert \exp\left( i (t+m) (s_k t_j) E_{\lambda_j}\right) \right) \vert \psi_2 \rangle \cr 
    &=  \langle \psi_1 \vert\left(\prod_j \sum_{\lambda_j,\lambda'_j} \vert \lambda'_j \rangle  \langle \lambda'_j \vert V_j \vert \lambda_j \rangle  \langle \lambda_j \vert \exp\left( i (t+m) (s_k t_j) E_{\lambda_j}\right) \right) \vert \psi_2 \rangle \cr
     &=   \sum_{\lambda_1,\lambda'_1,\dots,\lambda_M,\lambda'_M} \exp\left( i (t+m) \sum_l (s_k t_l) E_{\lambda_l}\right) \langle \psi_1 \vert\left(\prod^M_j \vert \lambda'_j \rangle  \langle \lambda'_j \vert V_j \vert \lambda_j \rangle  \langle \lambda_j \vert  \right) \vert \psi_2 \rangle 
\end{align}



Now, one defines
\begin{align}
    x(t) = w(t) \tilde{x}(t),
\end{align}
where
\begin{align}
    w(t) = \frac{1}{\sigma\sqrt{2\pi}}\exp\left(-\frac{t^2}{2{\sigma^2}}\right).
\end{align}

By triangle inequality, one can estimate the total error by looking at both the truncation error and the spectral error separately
\begin{align}
    \varepsilon_{\rm interp} = \max_t \left| x(t) - x_{{\rm interp},2q+1}(t) \right| &\leq \overbrace{\max_t \left| x(t) - x_{{\rm interp},\infty}(t) \right|}^{\varepsilon_{\rm spectral}} \cr
    &+ \underbrace{\max_t \left| x_{{\rm interp},\infty}(t) - x_{{\rm interp},2q+1}(t) \right|}_{\varepsilon_{\rm trunc}}.
\end{align}
Here,
\begin{align}
    x_{{\rm interp}, \infty} (t) = \sum_{n=-\infty}^{\infty} x(n) \sinc\left(t-n\right),
\end{align}
and
\begin{align}
    x_{{\rm interp}, 2q+1} (t) = \sum_{n=-q}^{q} x(n) \sinc\left(t-n\right).
\end{align}

Let us first bound the spectra\l error, $\varepsilon_{\rm spectral}$. Our result  is expressed in the following lemma:
\begin{lemma}\label{lem:spectral}
Provided that $\Delta_{\rm pad} = 1/4$ and $ \sigma \geq 1/(4\pi)$, then
\begin{align*}
    \varepsilon_{\rm spectral} \leq \left(\frac{8}{\sqrt{2\pi}}+2\right)\exp\left(-\frac{\pi^2\sigma^2}{2}\right).
\end{align*}
\end{lemma}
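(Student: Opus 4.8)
The plan is to estimate $\varepsilon_{\mathrm{spectral}}$ by a Fourier‑domain (Poisson–summation / aliasing) argument: the windowed function $x=w\,\tilde x$ has Fourier transform $\hat x=\hat w*\mu$, where $\hat w(\omega)=e^{-\sigma^{2}\omega^{2}/2}$ and $\mu$ is the finite, atomic spectral measure of $\tilde x$, so the mass of $\hat x$ outside the Nyquist band $[-\pi,\pi]$ — which is exactly what controls the error of integer‑node cardinal‑sine interpolation — decays like a Gaussian tail.

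I would carry this out in three steps. \emph{(i) Spectral localisation of $\tilde x$.} The eigenbasis expansion already performed above gives $\tilde x(t)=\sum_{\vec\lambda,\vec\lambda'}c_{\vec\lambda,\vec\lambda'}\,e^{i(t+m)\nu_{\vec\lambda}}$ with $\nu_{\vec\lambda}=\sum_l s_k t_l E_{\lambda_l}$, and each $s_k t_l E_{\lambda_l}$ is an eigen‑phase of $\log S_{p,l}(s_k t_l)$; combining $|s_k|\le1$, $\sum_\gamma\|H_{l,\gamma}\|\le1$, $t_l\le\pi$ and the constraint $\sum_l|t_l|\le\pi$ of \Cref{thm:q}, together with the padding $\Delta_{\mathrm{pad}}=1/4$, confines all $\nu_{\vec\lambda}$ to a band $[-B,B]$ whose distance to $\pm\pi$ and to the replicas $\pm2\pi+[-B,B]$ is what will sit in the Gaussian exponent. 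Equivalently, one may record that $\tilde x(t)=\langle\psi_1|\prod_j V_j S_{p,j}^{t+m}|\psi_2\rangle$ is entire of exponential type $B=\sum_j\|\log S_{p,j}(s_k t_j)\|$ with $|\tilde x^{(\ell)}(t)|\le B^{\ell}$ on the real axis. \emph{(ii) Aliasing identity.} Since $|x|\le w\in L^{1}\cap L^{2}$ and $\hat x=\hat w*\mu\in L^{1}$, one passes to the frequency domain: $\widehat{x-x_{\mathrm{interp},\infty}}$ equals $\hat x$ on $\{|\omega|>\pi\}$ and $-\sum_{k\ne0}\hat x(\,\cdot-2\pi k)$ on $[-\pi,\pi]$, hence
\[
 \varepsilon_{\mathrm{spectral}}\le\frac{1}{2\pi}\bigl\|\widehat{x-x_{\mathrm{interp},\infty}}\bigr\|_{1}\le\frac{1}{\pi}\int_{|\omega|>\pi}|\hat x(\omega)|\,d\omega
\]
after folding the replica integrals onto $\{|\omega|>\pi\}$. \emph{(iii) Gaussian tail.} Inserting $\hat x=\hat w*\mu$, swapping the order of integration and bounding $\int_{|\omega|>\pi}\hat w(\omega-\nu)\,d\omega\le\tfrac{2}{\sigma}\sqrt{\tfrac{\pi}{2}}\,\erfc\!\bigl(\sigma(\pi-|\nu|)/\sqrt2\bigr)$ uniformly over $\nu$ in the band (or, $\mu$‑free, integrating by parts $\ell$ times and optimising $\ell$ with $|\tilde x^{(\ell)}|\le B^{\ell}$), then using $\erfc(z)\le e^{-z^{2}}$ and the hypotheses $\Delta_{\mathrm{pad}}=1/4$ (to fix the exponent as $\pi^{2}\sigma^{2}/2$) and $\sigma\ge1/(4\pi)$ (to bound the algebraic prefactors), collapses the estimate to $\bigl(\tfrac{8}{\sqrt{2\pi}}+2\bigr)e^{-\pi^{2}\sigma^{2}/2}$, the two summands coming from the genuine out‑of‑band tail of $\hat x$ and from the periodic aliasing replicas.

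The step I expect to be the main obstacle is (i): certifying that no combination of eigen‑phases $\sum_l s_k t_l E_{\lambda_l}$ can reach $\pm\pi$, so that the Gaussian‑tail distance in the exponent does not degenerate. If it could, the out‑of‑band leakage of $\hat x=\hat w*\mu$ would cease to be exponentially small and the claimed bound would fail; this is exactly why $\sum_\gamma\|H_{l,\gamma}\|\le1$, $t_l\le\pi$, $\sum_l|t_l|\le\pi$ and $\Delta_{\mathrm{pad}}=1/4$ all have to be invoked together (and why the total mass $|\mu|$, or the analogous norm in the integration‑by‑parts route, must also be controlled). The rest — extracting the precise constant $\tfrac{8}{\sqrt{2\pi}}+2$ rather than a larger $O(1)$ — is careful bookkeeping with the $\erfc$ estimate, the $k\ne0$ replica sum, and the threshold $\sigma\ge1/(4\pi)$.
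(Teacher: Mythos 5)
Your proposal takes essentially the same route as the paper's proof: decompose $\tilde x$ in the eigenbasis into complex exponentials whose coefficients have total mass at most one, view the error of integer-node cardinal-sine interpolation in the Fourier domain as out-of-band leakage plus aliasing replicas of the Gaussian-windowed spectrum, and bound both pieces by $\erfc$/Gaussian tails using the padding $|\mu|\le 1/4$; the only real difference is bookkeeping (you fold the replica integrals into a single out-of-band integral, while the paper treats a worst-case single frequency, sums a geometric series over replicas, and invokes its derivative-bound lemma to shape the prefactor). The one step you assert rather than carry out is the final constant: your own $\erfc\bigl(\sigma(\pi-|\nu|)/\sqrt{2}\bigr)$ expression with $|\nu|\le\pi/2$ yields exponent $\pi^{2}\sigma^{2}/8$, whereas the stated $\pi^{2}\sigma^{2}/2$ is reached in the paper through its identification $\sigma_f=1/(4\pi\sigma)$, so that last piece of bookkeeping should be made explicit rather than claimed to "collapse" to the lemma's bound.
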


We start by decomposing $x(t)$ the following way
\begin{align}
    x(t) = \sum_{\vec\lambda,\vec\lambda'} o_{\vec\lambda} x^{\left(\mu_{\vec\lambda}\right)}(t),
\end{align}
where
\begin{align}
o_{\vec\lambda} = \sum_{\vec{\lambda}'}\langle \psi_1 \vert\left(\prod^M_j \vert \lambda'_j \rangle  \langle \lambda'_j \vert V_j \vert \lambda_j \rangle  \langle \lambda_j \vert  \right) \vert \psi_2 \rangle,   
\end{align}
\begin{align}
    \mu_{\vec\lambda} = \frac{1}{2\pi} \sum^M_{i=1} (s_k t_j) E_{\lambda_i}
\end{align}
and
\begin{align}
    x^{(\mu)}(t) = \frac{1}{\sigma\sqrt{2\pi}}\exp\left(-\frac{t^2}{2{\sigma^2}}\right) \exp\left(i2\pi t \mu \right) \exp\left(i2\pi m \mu \right).
\end{align}
We also have that
\begin{align}
    \sum_{\vec\lambda} \left|o_{\vec\lambda}\right| \leq 1,
\end{align}
thus, bounding the interpolation error for the worst-case $x^{(\mu)}(t)$ should also be an upper bound for the interpolation error for $x(n)$. Simply put,
\begin{align}
    \varepsilon_{\rm spectral}=\max_{t}\left|x (t)-x_{{\rm interp}, \infty} (t)\right|\leq \max_{\mu,t}\left|x^{(\mu)}(t)-x^{(\mu)}_{{\rm interp}, \infty} (t)\right|,
\end{align}
where
\begin{align}
    x^{(\mu)}_{{\rm interp}, \infty} (t) = \sum_{n=-\infty}^{\infty} x^{(\mu)}(n) \sinc\left(t-n\right).
\end{align}
Moreover, 
\begin{align}
    x^{(\mu)}_{{\rm interp}, \infty} (t) &= \mathcal{F}_f^{-1}\left\{\mathcal{F}_{t'}\left\{\sum^{\infty}_{n=-\infty} x^{(\mu)}(n) \sinc\left(t'- n \right)  \right\}(f)\right\}(t) \cr 
    &= \mathcal{F}_f^{-1}\left\{\sum^{\infty}_{n=-\infty} x^{(\mu)}(n) \mathcal{F}_{t'}\left\{\sinc\left(t'- n\right)  \right\}(f)\right\}(t) \cr
    &= \mathcal{F}_f^{-1}\left\{\sum^{\infty}_{n=-\infty} x^{(\mu)}(n) \rect\left(f \right) \exp\left(2\pi i f n \right) \right\}(t) \cr
    &= \mathcal{F}_f^{-1}\left\{\rect\left(f \right) X^{(\mu)}_{1}(f) \right\}(t).    
\end{align}
The Discrete-time Fourier transform, $X_{1}(f)$, can be written in terms of the Fourier transform
\begin{align}
    X^{(\mu)}(f) &= \mathcal{F}_t\left\{x^{(\mu)}(t)\right\}(f) \cr
    &= \frac{1}{\sigma_f \sqrt{2\pi}} \exp\left(- \frac12 (f-\mu)^2/\sigma_f^2\right),
\end{align}
where $\sigma_f = \frac{1}{4\pi \sigma}$, the following way
\begin{align}
    X^{(\mu)}_{1}(f)&=\sum^{\infty}_{n=-\infty} x^{(\mu)}(n )  \exp\left(2\pi i f n \right) \cr
    &=\int \comb\left(t\right) x^{(\mu)}(t)  \exp\left(2\pi i f t\right) \mathrm{d} t \cr
    &=\mathcal{F} \left\{\comb\left(t\right)\right\} \star \mathcal{F} \left\{ x^{(\mu)}(t) \right\} \cr
    &=\comb\left(f\right) \star X^{(\mu)}(f)  \cr
    &=\sum_k X^{(\mu)}(f-k).
\end{align}
Here, $\comb$ is the Dirac comb function
$$
\comb(t) = \sum^{\infty}_{k=-\infty} \delta (t-k).
$$
Now, the error between $x_{{\rm interp}, \infty}(t)$ and $x(t)$ is
\begin{align}\label{eq:eps_spec}
    \varepsilon_{\rm spectral}&\leq\max_{\mu}\left|\mathcal{F}^{-1}\left\{X^{(\mu)}(f)\right\} - \mathcal{F}^{-1}\left\{\rect\left(\frac{f}{f_s}\right) X^{(\mu)}_s(f)\right\}\right| \cr
    &\leq \max_{\mu}\left(\int{\sum_{k\neq 1} |X^{(\mu)}(f-k f_s)|\rect(f/f_s)} \mathrm{d}f \right) \cr
    &+\max_{\mu}\left(\int{ |X^{(\mu)}(f)|(1-\rect(f/f_s)}) \mathrm{d}f \right).
\end{align}
If one assumes
\begin{align}
    \frac{1}{2\pi} \sum^M_{j=1} t_j \leq 1/4
\end{align}
we know that $|\mu_{\vec\lambda}|\leq 1/4$. Thus, one can bound the error by letting $\mu$ be in the worst case scenarios defined by
\begin{align}
    |\mu| = 1/4.
\end{align}
This means the term in the last line of \Cref{eq:eps_spec},
\begin{align}
    &\int{ |X^{(1/4)}(f)|(1-\rect(f/f_s)}) \mathrm{d}f \cr
    &\leq 2 \int^{\infty}_{f=1/2} \frac{1}{\sigma_f \sqrt{2\pi}} \exp\left(- \frac12 (f-1/4)^2/\sigma_f^2\right)  \mathrm{d}f \cr 
    & = 2\erfc \left( \frac{1}{\sqrt{32} \sigma_f}\right) \cr 
    &\leq 2 \exp \left( -\frac{1}{ 32 \sigma_f^2} \right).
\end{align}
For the term second to last line in \Cref{eq:eps_spec}, the $\mu$ that maximizes it is $\mu=-1/4$, giving us:
\begin{align}
    &\int{\sum_{k\neq 1} |X^{(-1/4)}(f-k f_s)|\rect(f/f_s)} \mathrm{d}f \cr 
    &\leq\max_f\left(\sum_{k\neq 1} |X^{(-1/4)}(f-k f_s)|\right)\overbrace{\int{\rect(f/f_s)} \mathrm{d}f}^{1} \cr
    &=2\sum^{\infty}_{k = 1} \frac{1}{\sigma_f \sqrt{2\pi}} \exp\left(- \frac12 (1/2+1/4 - k )^2/\sigma_f^2\right) \cr 
    &=2\sum^{\infty}_{k = 0} \frac{1}{\sigma_f \sqrt{2\pi}} \exp\left(- \frac12 (k+1/4)^2/\sigma_f^2\right) \cr 
    &= 2\sum^{\infty}_{k = 0} \frac{1}{\sigma_f \sqrt{2\pi}} \exp\left(- \frac12 (k^2+\frac12 k + 1/2)/\sigma_f^2\right) \cr 
    &= \frac{2 \exp\left(- \frac12 ( 1/2)/\sigma_f^2\right)}{\sigma_f \sqrt{2\pi}}\sum^{\infty}_{k = 0}  \left(\exp\left(- \frac14 1/\sigma_f^2\right) \right)^{(2k^2+ k)}.
\end{align}
Now, if I assume $\exp\left(- \frac{3}{4\sigma_f^2}\right)\leq1$ I can state the following
\begin{align}
    &\int{\sum_{k\neq 1} |X^{(-1/4)}(f-k f_s)|\rect(f/f_s)} \mathrm{d}f \cr 
    &\leq \frac{2 \exp\left(- \frac12 ( 1/2)/\sigma_f^2\right)}{\sigma_f \sqrt{2\pi}}\sum^{\infty}_{k = 0}  \left(\exp\left(- \frac14 1/\sigma_f^2\right) \right)^{(2k^2+ k)} \cr    
    &\leq \frac{2 \exp\left(- \frac1{4\sigma_f^2}\right)}{\sigma_f \sqrt{2\pi}}\sum^{\infty}_{k = 0}  \left(\exp\left(- \frac{3}{4\sigma_f^2}\right) \right)^{k} \cr
    & = \frac{2 \exp\left(- \frac1{4\sigma_f^2}\right)}{\sigma_f \sqrt{2\pi}} \frac{1}{1-\exp\left(- \frac{3}{4\sigma_f^2}\right)}.
\end{align}
If one further assumes
\begin{align}
    \exp\left(-\frac{3}{4\sigma_f^2}\right) \leq 1/2,
\end{align}
then
\begin{align}
    &\int{\sum_{k\neq 1} |X^{(-1/4)}(f-k f_s)|\rect(f/f_s)} \mathrm{d}f \cr 
    & \leq \frac{4 \exp\left(- \frac1{4\sigma_f^2}\right)}{\sigma_f \sqrt{2\pi}}.
\end{align}
We note that 
\begin{align}
    \frac{ \exp\left(- \frac1{4\sigma_f^2}\right)}{\sigma_f } = 2\sigma_f^2\frac{\partial \exp\left(-\frac{1}{4\sigma_f^2}\right) }{\partial \sigma_f},
\end{align}
which leads us to introduce the following lemma:
\begin{lemma}[Derivative bound]\label{lem:fnbound}
If the function $f(z)$ is analytic on the complex disc of radius r centered at $z_0$, then 
\begin{align}
|f^{(n)}(z-z_0)| \leq \frac{M n! 2^n}{r^n},
\end{align}
for $|z-z_0| \leq \frac{r}{2}$, where $M=\max_{|z'-z_0|\leq r} |f(z'-z_0)|$.
\end{lemma}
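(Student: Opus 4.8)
The plan is to recognize \Cref{lem:fnbound} as a restatement of the classical Cauchy derivative estimates, and to prove it directly from the Cauchy integral formula for higher derivatives. (The notation $f^{(n)}(z-z_0)$ and $f(z'-z_0)$ in the statement should be read as $f^{(n)}(z)$ and $f(z')$; the $-z_0$ is a harmless bookkeeping artifact of how the lemma will be applied.) Fix a point $z$ with $|z-z_0|\le r/2$. The one geometric input is that for any radius $\rho$ with $\rho<r/2$, the closed disc $\{\zeta:\,|\zeta-z|\le\rho\}$ is contained in the disc of radius $r$ about $z_0$ on which $f$ is analytic: if $|\zeta-z|\le\rho$ then by the triangle inequality $|\zeta-z_0|\le|\zeta-z|+|z-z_0|\le\rho+r/2<r$. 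Hence $f$ is analytic on and inside the circle $|\zeta-z|=\rho$.

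Next I would invoke the Cauchy integral formula for the $n$-th derivative on that circle,
\begin{align*}
f^{(n)}(z)=\frac{n!}{2\pi i}\oint_{|\zeta-z|=\rho}\frac{f(\zeta)}{(\zeta-z)^{n+1}}\,\mathrm{d}\zeta,
\end{align*}
and estimate it with the standard $ML$-bound: on the contour one has $|f(\zeta)|\le M$ (because $|\zeta-z_0|<r$ and $M$ is the maximum of $|f|$ over the closed disc of radius $r$), $|\zeta-z|^{n+1}=\rho^{n+1}$, and the contour has length $2\pi\rho$, so
\begin{align*}
|f^{(n)}(z)|\le\frac{n!}{2\pi}\cdot\frac{M}{\rho^{n+1}}\cdot 2\pi\rho=\frac{n!\,M}{\rho^{n}}.
\end{align*}

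Finally, since this inequality holds for every $\rho<r/2$, I would let $\rho\uparrow r/2$ to obtain $|f^{(n)}(z)|\le n!\,M/(r/2)^n=2^n n!\,M/r^n$, which is exactly the claimed bound for all $|z-z_0|\le r/2$. There is essentially no obstacle here—this is a textbook fact—so the only thing to be careful about is the routine point that the integration contour must lie strictly inside the domain of analyticity, which is precisely what the restriction $|z-z_0|\le r/2$ (together with the limiting argument in $\rho$) guarantees. For the use in the preceding computation, one then applies the lemma with $f(\sigma_f)=\exp(-1/(4\sigma_f^2))$ on a suitable disc to control $\partial_{\sigma_f}\exp(-1/(4\sigma_f^2))$, but that is the application rather than the proof of the lemma itself.
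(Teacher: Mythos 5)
Your proof is correct: it is the classical Cauchy derivative estimate, obtained from the Cauchy integral formula on a circle of radius $\rho<r/2$ about $z$ together with the $ML$-bound and the limit $\rho\uparrow r/2$, and your handling of the slightly awkward notation $f^{(n)}(z-z_0)$ is the right reading. The paper itself states \Cref{lem:fnbound} without any proof, implicitly treating it as this standard fact, so your argument supplies exactly the justification the paper relies on; there is no competing approach in the paper to compare against.
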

Now, using this last lemma for $z-z_0=\sigma_f$, $z_0=2\sigma_f>0$ and $r=2\sigma_f$ one obtains
\begin{align}
    M = \max_{|z'-z_0|\leq 2\sigma_f} \exp\left(-\frac{1}{4(z' - 2\sigma_f)^2}\right) = \exp\left(-\frac{1}{4(4\sigma_f - 2\sigma_f)^2}\right) = \exp\left(-\frac{1}{16\sigma_f ^2}\right).
\end{align}
With this, we obtain the following bound:
\begin{align}
     2\sigma_f^2\frac{\partial \exp\left(-\frac{1}{4\sigma_f^2}\right) }{\partial \sigma_f} \leq 2 \sigma_f^2 \frac{2 }{2\sigma_f} \exp\left(-\frac{1}{16\sigma_f^2}\right) = 2 \sigma_f  \exp\left(-\frac{1}{16\sigma_f^2}\right).
\end{align}
Finally, if I restrict myself to considering $\sigma_f\leq 1$,
\begin{align}
    \frac{ \exp\left(- \frac1{4\sigma_f^2}\right)}{\sigma_f } \leq 2 \exp\left(-\frac{1}{16\sigma_f^2}\right). 
\end{align}
Thus,
\begin{align}
    &\int{\sum_{k\neq 1} |X^{(-1/4)}(f-k f_s)|\rect(f/f_s)} \mathrm{d}f \cr 
    &\leq \frac{8}{\sqrt{2\pi}}\exp\left(-\frac{1}{16\sigma_f^2}\right) \leq \frac{8}{\sqrt{2\pi}}\exp\left(-\frac{1}{32\sigma_f^2}\right).
\end{align}
Putting together the bounds on the last two terms in \Cref{eq:eps_spec}, one gets
\begin{align}
    \varepsilon_{\rm spectral} \leq \left(\frac{8}{\sqrt{2\pi}}+2\right)\exp\left(-\frac{1}{32\sigma_f^2}\right). 
\end{align}

Now, I wish to obtain a bound on the truncation error, $\varepsilon_{\rm trunc}$. This is summarized in the following lemma:
\begin{lemma}\label{lem:truncation}
\begin{align*}
        \varepsilon_{\rm trunc} & =\max_{t}\left| x_{{\rm interp},\infty}(t) - x_{{\rm interp},2q+1}(t) \right| \leq 2 \exp\left( - \frac{(q+2)^2}{2\sigma^2} \right).
\end{align*}
\end{lemma}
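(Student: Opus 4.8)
The plan is to recognize the quantity as the tail of the cardinal (sinc) series and to bound it crudely, exploiting that the interpolated function is an overlap of a product of unitaries with unit vectors and therefore has modulus at most one at every node. Writing $x(t)=w(t)\,\tilde x(t)$, one has
\begin{align*}
    x_{{\rm interp},\infty}(t)-x_{{\rm interp},2q+1}(t)=\sum_{|n|>q} x(n)\,\sinc(t-n),
\end{align*}
so, using $|\sinc(\cdot)|\le 1$ together with $|\tilde x(n)|=\abs{\langle\psi_1\vert\prod^{M}_{j}V_j S^{n+m}_{p,j}(s_k t_j)\vert\psi_2\rangle}\le 1$,
\begin{align*}
    \varepsilon_{\rm trunc}\le\sum_{|n|>q}|x(n)|\le\frac{2}{\sigma\sqrt{2\pi}}\sum_{n=q+1}^{\infty}\exp\!\left(-\frac{n^2}{2\sigma^2}\right),
\end{align*}
and the problem is reduced to estimating a one-sided Gaussian tail sum.

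For that tail sum I would use monotonicity of $e^{-x^2/(2\sigma^2)}$ on $[0,\infty)$ to pass to an integral, $\sum_{n\ge q+1}e^{-n^2/(2\sigma^2)}\le\int_{q}^{\infty}e^{-x^2/(2\sigma^2)}\mathrm{d}x$, and then bound that integral either through $\int_{q}^{\infty}e^{-x^2/(2\sigma^2)}\mathrm{d}x=\sigma\sqrt{\pi/2}\,\erfc\!\left(\frac{q}{\sqrt{2}\,\sigma}\right)\le\sigma\sqrt{\pi/2}\,e^{-q^2/(2\sigma^2)}$, or, when a sharper estimate is wanted, through the elementary inequality $\int_{a}^{\infty}e^{-x^2/(2\sigma^2)}\mathrm{d}x\le\frac{\sigma^2}{a}e^{-a^2/(2\sigma^2)}$ (insert $x/a\ge 1$ into the integrand and integrate exactly), possibly after first peeling off the leading term $e^{-(q+1)^2/(2\sigma^2)}$. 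Any of these routes leaves $\varepsilon_{\rm trunc}$ bounded by a $\sigma$-dependent prefactor times a Gaussian in $q$.

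The last step, converting this into the advertised $2\exp(-(q+2)^2/(2\sigma^2))$, is the fiddly one: the stray $1/\sigma$ prefactor must be folded into the exponent at the price of a slightly larger argument, exactly as in the proof of \Cref{lem:spectral} — one uses inequalities of the shape $\frac{1}{\sigma}e^{-a/\sigma^2}\le C\,e^{-a'/\sigma^2}$ with $a'<a$, which follow by a one-variable optimization in $\sigma$ or directly from \Cref{lem:fnbound} applied to $e^{-a/z^2}$ — and then $(q+2)^2=q^2+4q+4$ is expanded to absorb the linear-in-$q$ slack and the leftover numerical constants. I expect the main obstacle to be precisely this constant bookkeeping: one must budget how much of the quadratic exponent is spent absorbing the $1/\sigma$ factor versus the leading sinc and tail contributions, and verify that the accounting closes with coefficient exactly $2$ and shift exactly $2$. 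The analytic content — that the truncation error is the tail of an $\ell^1$-summable sinc series against a Gaussian envelope — is routine.
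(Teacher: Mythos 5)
Your reduction is the same as the paper's up to the Gaussian tail: write the omitted part of the cardinal series as $\sum_{|n|>q}x(n)\sinc(t-n)$, use $|\sinc|\le 1$ and $|\tilde x(n)|\le 1$ so that $|x(n)|\le w(n)$, and you are left with $\varepsilon_{\rm trunc}\le \frac{2}{\sigma\sqrt{2\pi}}\sum_{n\ge q+1}e^{-n^2/(2\sigma^2)}$. The genuine gap is in your last step. Your (correct) integral comparison gives $\varepsilon_{\rm trunc}\le \frac{2}{\sigma\sqrt{2\pi}}\int_{q}^{\infty}e^{-x^2/(2\sigma^2)}\,\mathrm{d}x=\erfc\!\left(\frac{q}{\sqrt{2}\sigma}\right)\le e^{-q^2/(2\sigma^2)}$; note the $1/\sigma$ prefactor never needs to be ``folded into the exponent'' at all --- it is absorbed by the Gaussian normalization inside the $\erfc$. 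The real obstruction is the exponent itself: the lemma claims $2e^{-(q+2)^2/(2\sigma^2)}$, which is \emph{smaller} than your $e^{-q^2/(2\sigma^2)}$ by the factor $\tfrac12 e^{-(2q+2)/\sigma^2}$, exponentially small in $q/\sigma^2$ (about $e^{-2\pi}/2$ for the paper's eventual choice $\sigma^2=(q+2)/\pi$, and arbitrarily small for smaller $\sigma$). Inequalities of the shape $\frac{1}{\sigma}e^{-a/\sigma^2}\le C\,e^{-a'/\sigma^2}$ with $a'<a$, or expanding $(q+2)^2=q^2+4q+4$, only ever \emph{weaken} an exponent; here you would have to strengthen one, so no constant bookkeeping can close this step. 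Peeling off the leading term only gets you to $(q+1)^2$, still short of the target.

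For comparison, the paper does not touch the prefactor either; it obtains the $(q+2)$ shift by replacing the sum $\sum_{n\ge q+1}$ directly with $2\int_{q+2}^{\infty}\frac{1}{\sigma\sqrt{2\pi}}e^{-x^2/(2\sigma^2)}\,\mathrm{d}x$ and then bounding the resulting $\erfc\!\left(\frac{q+2}{\sqrt{2}\sigma}\right)$ by the Gaussian. Your own estimate shows why that step is the delicate one: the standard integral test for a decreasing integrand gives $\int_{q}^{\infty}$, not $\int_{q+2}^{\infty}$, and the discarded term $n=q+1$ alone contributes $\frac{1}{\sigma\sqrt{2\pi}}e^{-(q+1)^2/(2\sigma^2)}$, which exceeds $2e^{-(q+2)^2/(2\sigma^2)}$ unless $\sigma$ is sufficiently large relative to $q$ (as it is under the later choice $\sigma^2=(q+2)/\pi$). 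So as written your route honestly proves the weaker statement $\varepsilon_{\rm trunc}\le e^{-q^2/(2\sigma^2)}$; to recover the advertised $(q+2)^2$ exponent you must either add a hypothesis tying $\sigma$ to $q$ and treat the first few omitted terms separately, or keep the $q^2$ exponent and propagate the (inconsequential) shift through the choice of $q$ in Theorem~1 --- but it is not, as proposed, a matter of absorbing a $1/\sigma$ and some numerical constants.
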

\begin{proof}

\begin{align}
    \varepsilon_{\rm trunc} & =\max_{t}\left| x_{{\rm interp},\infty}(t) - x_{{\rm interp},2q+1}(t) \right|\cr 
&\leq\max_{t,\mu}\left| x^{(\mu)}_{{\rm interp},\infty}(t) - x^{(\mu)}_{{\rm interp},2q+1}(t) \right|\cr     
    &\leq\sum_{ n\not\in \{k\mid -q \leq  k \leq q\} }  \left|x^{(\mu)}(t)\sinc \left(t-n\right)\right| \cr
    &\leq\sum_{ n\not\in \{k\mid -q \leq  k \leq q\} }  \left\vert x^{(\mu)}(t)\right\vert \cr 
    &\leq\sum_{ n\not\in \{k\mid -q \leq  k \leq q\} }  \left\vert w(t)\right\vert \cr    
    & \leq\sum_{ n\not\in \{k\mid -q \leq  k \leq q\} }\frac{1}{\sigma \sqrt{2\pi}}\exp\left(-\frac{n^2}{2{\sigma^2}}\right) \cr 
    &=2\sum^{\infty}_{ n = q+1} \frac{1}{\sigma\sqrt{2\pi}}\exp\left(-\frac{n^2}{2{\sigma^2}}\right) \cr 
    &\leq 2\int^{\infty}_{ n = q+2} \frac{1}{\sigma\sqrt{2\pi}}\exp\left(-\frac{n^2}{2{\sigma^2}}\right) \cr 
    &= 2\erfc\left(\frac{(q+2)}{\sqrt{2}\sigma}\right)\leq 2 \exp\left( - \frac{(q+2)^2}{2\sigma^2} \right).
\end{align}
\end{proof}

Now, I set equal the exponentials on the bounds for $\varepsilon_{\rm spectral}$ and $\varepsilon_{\rm trunc}$ from \Cref{lem:spectral} \Cref{lem:truncation} respectively:
\begin{align}
    \exp\left( - \frac{(q+2)^2}{2\sigma^2} \right) = \exp\left(-\frac{\pi^2\sigma^2}{2}\right),
\end{align}
which gives
\begin{align}
    \sigma_f = \frac{1}{4\sqrt{\pi}}\frac{1}{\sqrt{q+2}}.
\end{align}
With this, $\varepsilon_{\rm interp}$ can be bounded through
\begin{align}
    \varepsilon_{\rm interp} \leq \left(4 + \frac{8}{\sqrt{2\pi}}\right)\exp\left(-\frac{\pi(q+2)}{2}\right).
\end{align}
Thus, solving for $q$, we obtain an upper bound for it:
\begin{align}
  q\leq \frac{2}{\pi}\log\frac{\left(4 + \frac{8}{\sqrt{2\pi}}\right)}{\varepsilon_{\rm interp}} -2.
\end{align}
The interpretation of this bound is the following: ``If we have an error $\varepsilon_{\rm interp}$, the value of $q$ must be at most $\frac{2}{\pi}\log\frac{\left(4 + \frac{8}{\sqrt{2\pi}}\right)}{\varepsilon_{\rm interp}} -2$''.

At this point, the error $\varepsilon_{\rm interp}$ is with respect to the windowed function $w(t)\langle \psi_1 \vert \prod^{M}_{j} V_j S^{t+m}_{p,j}(s_k t_{j})  \vert \psi_2 \rangle$. Now, the bounds on $q$ with respect to the error
\begin{align}
    \tilde\varepsilon_{\rm interp} = \left| w(r)\langle \psi_1 \vert \prod^{M}_{j} V_j S^{r+m}_{p,j}(s_k t_{j})  \vert \psi_2 \rangle - x_{{\rm interp},2q+1}(r) \right| = \varepsilon_{\rm interp}/w(r)
\end{align}
After making the substitution the upper bound for $q$ becomes:
\begin{align}
  q&\leq \frac{2}{\pi}\log\frac{\left(4 + \frac{8}{\sqrt{2\pi}}\right)}{w(r)\tilde\varepsilon_{\rm interp}} -2 \cr
       &\leq  \frac{2}{\pi}\log\frac{\sigma\left(4\sqrt{2\pi} + 8\right)}{\exp\left(-\frac{r^2}{2{\sigma^2}}\right)\tilde\varepsilon_{\rm interp}} -2 \cr
     &\leq  \frac{2}{\pi}\log\frac{\sigma\left(4\sqrt{2\pi} + 8\right)}{\tilde\varepsilon_{\rm interp}} + \frac{2}{\pi}\frac{1}{2{\sigma^2}} -2 \cr
    &\leq  \frac{2}{\pi}\log\frac{\left(4\sqrt{2\pi} + 8\right)}{\pi \sigma_f\tilde\varepsilon_{\rm interp}} + \frac{2}{\pi}\frac{16 \pi ^2 \sigma_f^2}{2} -2 \cr
  &\leq  \frac{2}{\pi}\log\frac{\sqrt{q+2}\left(4\sqrt{2\pi} + 8\right)}{\sqrt{\pi} \tilde\varepsilon_{\rm interp}} + \frac{2}{\pi}\frac{ \pi }{2(q+2)} -2 \cr
  &\leq  \frac{1}{\pi}\log\frac{(q+2)\left(4\sqrt{2\pi} + 8\right)^2}{ \pi \tilde\varepsilon_{\rm interp}^2} + \frac{1}{\pi}\frac{ \pi }{(q+2)} -2   
\end{align}
If I assume that $q\geq 0$, I can loosen the bound to
\begin{align}
     (q+2) &\leq  \frac{1}{\pi}\log\frac{(q+2) \exp\left(\pi/2\right)\left(4\sqrt{2\pi} + 8\right)^2}{ \pi \tilde\varepsilon_{\rm interp}^2}
\end{align}

Solving for $q$ one obtains
\begin{align}
     (q+2) &\leq  -\frac{1}{\pi}W_{-1}\left(-\frac{ \pi^2 \tilde\varepsilon_{\rm interp}^2}{ \exp\left(\pi/2\right)\left(4\sqrt{2\pi} + 8\right)^2}\right),
\end{align}
provided $0\leq \frac{ \pi^2 \tilde\varepsilon_{\rm interp}^2}{ \exp\left(\pi/2\right)\left(4\sqrt{2\pi} + 8\right)^2} \leq 1/e$. Now, I use $-W_{-1}(-e^{-u-1})\leq 1 +\sqrt{2 u}+u\leq 1+3 u$ for $u>0$. I now are able to identify 
\begin{align}
    u+1 = \log\frac{ \exp\left(\pi/2\right)\left(4\sqrt{2\pi} + 8\right)^2}{ \pi^2 \tilde\varepsilon_{\rm interp}^2}  \cr 
    u = \log\frac{ \exp\left(\pi/2\right)\left(4\sqrt{2\pi} + 8\right)^2}{ e \pi^2 \tilde\varepsilon_{\rm interp}^2}
\end{align}
which means $\frac{ \pi^2 \tilde\varepsilon_{\rm interp}^2}{ \exp\left(\pi/2\right)\left(4\sqrt{2\pi} + 8\right)^2} < 1/e$. Using the aforementioned bound, one obtains
\begin{align}
     q &\leq  \frac{6}{\pi}\left(\log\frac{ 4\sqrt{2\pi} + 8}{ \pi e^{1/3+\pi/12}  \tilde\varepsilon_{\rm interp}} \right).
\end{align}
This means that, for a given target error $\tilde\varepsilon_{\rm interp}$ we need only choose any:
\begin{align}
     q &\geq  \frac{6}{\pi}\left(\log\frac{ 4\sqrt{2\pi} + 8}{ \pi e^{1/3+\pi/12}  \tilde\varepsilon_{\rm interp}} \right).
\end{align}
Thus, I have proven \Cref{thm:q}.

\section{Optimally-conditioned Trotter Interpolation \label{sec:cheb_interp}}

After having solved the issue interpolating to fractional queries with powers $g/s_k$ I will now discuss the interpolation of the amplitudes
\begin{align}
    f(s)=\langle \psi_1 \vert \prod^{M}_{j} V_j S^{g/s}_{p,j}(s t_{j})  \vert \psi_2 \rangle = \langle \psi_1 \vert \prod^{M}_{j} V_j \left(e^{i(s t_j)\tilde{H}_{p,j}(s t_j) } \right)^{(g/s)}  \vert \psi_2 \rangle
\end{align}
to $s\to 0$. These results will be summarized at the end of the section in \Cref{thm:Troter_Cheb_interp}.

The interpolant I propose is the following linear combination of orthogonal polynomials
\begin{align}\label{eq:O_expans_opt}
   P_{n-1} f(s) = \sum_{j=0}^{n-1} c_{j} p_j (s).
\end{align}
One samaples $f(s)$ at certain values of $s$, $s_k$, and can now solve the following system of linear equations
\begin{align} \label{eq:cheb_sys}
    y = \mathbf{V} c
\end{align}
where $y = (f(s_1), f(s_2), \dots, f(s_n))$, and
\begin{align}
 \mathbf{V} :=
\begin{pmatrix}
 p_0(s_1)   & p_1(s_1)   & \dots  & p_{n-1}(s_1) \\
 p_0(s_2)   & p_1(s_2)   & \dots  & p_{n-1}(s_2) \\
 \vdots     & \vdots     & \ddots & \vdots  \\
 p_0(s_{n}) & p_1(s_{n}) & \dots  & p_{n-1}(s_{n})
\end{pmatrix}.
\end{align}
Thus, one can obtain the coefficients, $c_j$, with $c=\mathbf{V}^{-1}y$. Like in \cite{Rendon_extrapol}, the choice of interpolating nodes $s_k$ and the interpolating set of polynomials are the Chebyshev nodes and polynomials.
In that case, $p_j$ is defined by
\begin{align} \label{eq:cheb_orthonorm_def}
     p_j(s) :=
    \begin{cases}
    \sqrt{\frac1n}T_0(s), &j=0 \\
    \sqrt{\frac2n}T_j(s), &j=1,2,\dots \\
    \end{cases} 
\end{align}
where $T_j$ is the standard $j$th Chebyshev polynomial.
\begin{align}
    T_j(x) := \cos (n \cos^{-1} x)
\end{align}
The node collocation is described by
\begin{align} \label{eq:cheb_node_def}
    s_i = \cos\left(\frac{2i-1}{2n} \pi\right).
\end{align}
These polynomials fulfill the discrete orthonormality condition~\cite{mason2002chebyshev} with respect to the collocation nodes, that is,
\begin{align}
    \sum_{k=1}^n p_i (s_k) p_j (s_k) = \delta_{ij}
\end{align}
for all $0\leq i, j < n$. Instead of estimating the coefficients $c_i$ intermediately to estimate $f(0)$, one knows from \cite{Rendon_extrapol}, that 
\begin{align}
   P_{n-1}f(0)=\sum^n_{k=1} d_k f(s_k),
\end{align}
where
\begin{align}
    d_k(0) &= \frac{1}{n}(-1)^{k+n/2} \tan\left(\frac{2k-1}{2n}\pi\right).
\end{align}
We define
\begin{align}
    \epsilon_{\rm cheb} = |f(0) - P_{n-1}f(0)|.
\end{align}
Now, to get estimates on the Trotter interpolation error bounds, I need to introduce some methods of complex analysis. For each $\rho > 1$, let $B_\rho \subset \mathbb{C}$ be the Bernstein ellipse, which is an ellipse with foci at $\pm 1$ and the semimajor axis is $(\rho + \rho^{-1})/2$. The following lemma bounds the Chebyshev interpolation error for analytic functions on $B_\rho.$
\begin{lemma}\label{lem:Berns}
    Let $f(z)\in \mathbb{C}$ be an analytic function on $B_\rho$, and suppose $C\in\mathbb{R}_+$ is an upper bound such that $|f(z)| \le C$, for all $z \in B_\rho$. Then the Chebyshev interpolation error on $[-1,1]$ satisfies
    $$
    \norm{f - P_n f}_{\infty} \leq \frac{4 C \rho^{-n}}{\rho - 1}
    $$
    for each degree $n>0$ of the interpolant through the $n+1$ Chebyshev nodes.
\end{lemma}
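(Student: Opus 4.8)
\emph{Proof proposal.} This is the classical Bernstein-ellipse estimate for Chebyshev interpolation (cf.\ Trefethen, \emph{Approximation Theory and Approximation Practice}, Thm.~8.2), so I would follow the standard route through the Chebyshev series. First I would expand $f$ in its Chebyshev series $f(z) = \sum_{k=0}^{\infty} a_k T_k(z)$ on $[-1,1]$, where $a_k = \frac{2}{\pi}\int_{-1}^{1} f(x) T_k(x)(1-x^2)^{-1/2}\,\mathrm{d}x$ for $k\geq 1$ (and half that for $k=0$). Changing variables $x = \cos\theta$ and then $u = e^{i\theta}$ turns this into a contour integral of $f$ over the unit circle in the $u$-plane under the Joukowski map $z = (u+u^{-1})/2$; since $f$ is analytic on $B_\rho$, which is the image of the annulus $\rho^{-1} < |u| < \rho$, I can deform the contour to $|u| = \rho$ and use $|f|\le C$ there to obtain the coefficient decay bound $|a_k| \le 2C\rho^{-k}$ for all $k\ge 1$ (and $|a_0|\le C$).

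Next I would control the aliasing incurred by interpolation. Because the $n+1$ Chebyshev nodes from \Cref{eq:cheb_node_def} satisfy $T_k(s_i) = T_{k'}(s_i)$ whenever $k \equiv \pm k' \pmod{2n}$, the interpolant satisfies $P_n T_k = T_{k'}$ for the unique $k'\in\{0,\dots,n\}$ with $k \equiv \pm k' \pmod{2n}$; in particular $P_n T_k = T_k$ for $0\le k\le n$. Applying $P_n$ termwise to the Chebyshev series and subtracting gives
\begin{align}
f - P_n f = \sum_{k=n+1}^{\infty} a_k\bigl(T_k - P_n T_k\bigr),
\end{align}
and since $\|T_k\|_\infty = 1$ and $\|P_n T_k\|_\infty = 1$, the triangle inequality yields $\|f - P_n f\|_\infty \le 2\sum_{k=n+1}^{\infty} |a_k|$.

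Finally I would insert the coefficient bound and sum the geometric series:
\begin{align}
\|f - P_n f\|_\infty \;\le\; 2\sum_{k=n+1}^{\infty} 2C\rho^{-k} \;=\; 4C\,\frac{\rho^{-(n+1)}}{1-\rho^{-1}} \;=\; \frac{4C\rho^{-n}}{\rho-1},
\end{align}
which is the claimed bound. The main technical obstacle is the first step — justifying the contour deformation and the coefficient estimate $|a_k|\le 2C\rho^{-k}$, which requires knowing that $f$, assumed analytic on the closed region $B_\rho$, is in fact analytic on an open neighborhood so that the deformation onto $|u|=\rho$ is valid (or, alternatively, applying the bound on $B_{\rho'}$ for every $\rho' < \rho$ and letting $\rho'\to\rho$). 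The aliasing identity is elementary once the periodicity of $T_k(\cos\theta) = \cos(k\theta)$ on the node set is noted, and the geometric summation is routine.
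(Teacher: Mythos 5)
Your proof is correct and takes essentially the same route as the paper, which simply cites Theorem 8.2 of Trefethen's \emph{Approximation Theory and Approximation Practice}; your argument is a faithful reconstruction of that textbook proof (Chebyshev coefficient decay via the Bernstein ellipse, aliasing under interpolation, geometric-series summation). The only cosmetic imprecision is the aliasing identity at the first-kind nodes of \Cref{eq:cheb_node_def}, where $T_k$ aliases to $\pm T_{k'}$ modulo twice the number of nodes (with sign flips), but since you only use $\norm{P_n T_k}_\infty \le 1$ this does not affect the final bound.
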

\begin{proof}
    Theorem 8.2 of Ref.~\cite{trefethen2019approximation}.
\end{proof}

With this lemma in place, I now wish to upper bound the corresponding norm of the amplitude $f(z)$ on the Bernstein ellipse by promoting the real variable $s$ to $z \in B_\rho$.

First, like in \Cref{sec:setup}, I assume a sufficiently small $|\tau_i|$, where $\tau_i = z t_j$, such that
\begin{align}
    \tilde{H}_j(\tau_j) = \frac{\log\left( S_{p,j}(\tau_j)\right)}{i\tau_j}
\end{align}
is single valued. Now, recalling, with slight modifications, Lemma 14 from~\cite{Rendon_extrapol}:
\begin{lemma}\label{lem:H_error_bound}
Let $\tilde{H}_j (\tau_j)$ be the effective Hamiltonian associated with a complex-time $p$th order product formula $S_{p,j}(\tau_j)$ , where $\tau_j = z t_j$ with $z\in\mathbb{C}$ and $t_j\in\mathbb{R}_+$, which is analytically continued to an open neighborhood containing $z\in [-1,1]$. The norm of operator error can upper bounded as
\begin{align*}
\norm{H_j - \tilde{H}_j} \leq \frac{5}{2}\sum_{(\upsilon,m)} \acomm\big(H_{j,\pi_{\Upsilon}(\Gamma)},\ldots,H_{j,\pi_{\upsilon}(m+1)},H_{j,\pi_{\upsilon}(m)}\big) \cr
\cdot\frac{\abs{\tau}^p}{(p+1)!} e^{2|\tau_j| \, \sum^{m}_{j} \left\| H_{j} \right\|}
\end{align*}
provided that $\norm{H - \tilde{H}_j}\leq15$, $|\tau|\leq 1/8$, and
\begin{align*}
\sum_{(\upsilon,m)} \acomm\big(H_{j,\pi_{\Upsilon}(\Gamma)},\ldots,H_{j,\pi_{\upsilon}(m+1)},H_{j,\pi_{\upsilon}(m)}\big)\frac{\abs{\tau}^p}{(p+1)!} e^{2|\tau_j| \, \sum^{}_{m_j} \left\| H_{j,m_j} \right\|} \leq 1.
\end{align*}
\end{lemma}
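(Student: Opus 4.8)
\medskip\noindent\emph{Proof idea.} This is Lemma~14 of \cite{Rendon_extrapol} transcribed into the present notation --- the rescaling $\tau_j = z t_j$ with $t_j\in\mathbb{R}_+$, together with the fact that each $H_j$ is itself $\sum_{\gamma=1}^{m_j}H_{j,\gamma}$ --- so the plan is to follow that proof and merely flag where the modifications enter. The overall strategy I would use is to control $\tilde H_j-H_j$ in two stages: first bound the group-level deviation $\norm{S_{p,j}(\tau_j)-e^{i\tau_j H_j}}$, and then transfer that bound down to the generators via the Lipschitz behaviour of the matrix logarithm on a small ball around the identity.

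For the first stage I would expand $S_{p,j}(\tau_j)$ in powers of $\tau_j$, invoke the order-$p$ condition to cancel all terms below order $p+1$, and recognise the leading remainder as a fixed linear combination of $(p{+}1)$-fold nested commutators of the $H_{j,\gamma}$, exactly as in the real-time product-formula error analysis. The one change is that, $\tau_j$ now being complex, the factors $e^{i\tau_j H_{j,\gamma}}$ are no longer unitary, so wherever the real-time argument uses $\norm{e^{i\tau_j H_{j,\gamma}}}=1$ one substitutes $\norm{e^{i\tau_j H_{j,\gamma}}}\le e^{|\tau_j|\norm{H_{j,\gamma}}}$; carrying these factors through the Taylor remainders for $S_{p,j}$ and for $e^{i\tau_j H_j}$ is what produces the prefactor $e^{2|\tau_j|\sum_\gamma\norm{H_{j,\gamma}}}$, while leaving the combinatorial coefficient $\acomm(\cdots)$ and the $1/(p+1)!$ intact. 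The outcome is a bound $\norm{S_{p,j}(\tau_j)-e^{i\tau_j H_j}}\le \sum_{(\upsilon,m)}\acomm(\cdots)\,\abs{\tau}^{p+1}e^{2|\tau_j|\sum_\gamma\norm{H_{j,\gamma}}}/(p+1)!$, and here the hypotheses $\sum_\gamma\norm{H_{j,\gamma}}\le1$ and $|\tau|\le1/8$ are precisely what keep this quantity small.

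For the second stage, I would note that under $|\tau|\le1/8$ one has $\norm{i\tau_j H_j}\le 1/8<\pi$, so $\log e^{i\tau_j H_j}=i\tau_j H_j$ on the principal branch and $\tilde H_j-H_j=\frac{1}{i\tau_j}\big(\log S_{p,j}(\tau_j)-\log e^{i\tau_j H_j}\big)$. The standard integral representation of a difference of matrix logarithms gives that $M\mapsto\log M$ is $\tfrac{1}{1-\rho}$-Lipschitz on the operator-norm ball of radius $\rho<1$ about $I$; since the first stage (with $\sum_\gamma\norm{H_{j,\gamma}}\le1$ and $|\tau|\le1/8$) places both $S_{p,j}(\tau_j)$ and $e^{i\tau_j H_j}$ comfortably inside radius $\rho=3/5$, the constant is $\tfrac52$. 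Applying this and then dividing by $|\tau_j|$ turns $\abs{\tau}^{p+1}$ into $\abs{\tau}^p$ and yields the claimed bound, overall factor $\tfrac52$ included. The a~priori hypothesis $\norm{H_j-\tilde H_j}\le15$ and the assumed smallness ($\le1$) of the commutator sum are the conditions under which this whole chain of estimates --- in particular the self-consistent control of the effective Hamiltonian and the validity of the chosen logarithm branch --- actually closes; I would finish with a short bootstrap, checking that the bound just obtained is itself well below $15$ throughout the stated parameter regime, so the a~priori assumption can be discharged.

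The step I expect to be the real work is the first stage: the real-time Trotter-error bookkeeping has to be rerun tracking operator norms rather than unitarity, and one must verify that the combinatorial structure of $\acomm(\cdots)$ survives unchanged and that no exponential factor worse than $e^{2|\tau_j|\sum_\gamma\norm{H_{j,\gamma}}}$ is incurred. By comparison the second stage is a routine perturbation-of-the-logarithm estimate, and the closing bootstrap is arithmetic.
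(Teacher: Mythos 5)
Your proposal is essentially the paper's own treatment: the paper gives no independent proof here, but simply invokes Lemma~14 of \cite{Rendon_extrapol} ``with slight modifications'' for the complex rescaling $\tau_j = z t_j$, which is exactly what you identify and reconstruct. Your two-stage sketch (complex-time Trotter error bound with $\|e^{i\tau_j H_{j,\gamma}}\|\leq e^{|\tau_j|\|H_{j,\gamma}\|}$ replacing unitarity, followed by a Lipschitz estimate for the matrix logarithm near the identity yielding the factor $\tfrac{5}{2}$, with the hypotheses $|\tau|\leq 1/8$ and the commutator-sum bound closing the bootstrap) is a faithful account of how that cited lemma is obtained, so there is nothing further to flag.
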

Here, $\acomm\big(A_s,\ldots,A_1,B\big):=\sum_{q_1+\cdots+q_s=p}\binom{p}{q_1\ \cdots\ q_s}\norm{\ad_{A_s}^{q_s}\cdots\ad_{A_1}^{q_1}(B)}$. This means that, for a $p$th-order product formula one has that for $L_j(\tau_j)=\tilde{H}_j(\tau_j)-H_j$
\begin{align}
    \|L_j(\tau_j)\| = O \left(\frac{|\tau_j|^p \alpha_j}{(p+1)!} \right),
\end{align}
provided the constraints of Lemma 14 of ~\cite{Rendon_extrapol}  are fulfilled. Here, 
\begin{align}
    \alpha_j = \sum_{(\upsilon,m)} \acomm\big(H_{j,\pi_{\Upsilon}(\Gamma)},\ldots,H_{j,\pi_{\upsilon}(m+1)},H_{j,\pi_{\upsilon}(m)}\big).
\end{align}
We specifically look at each evolution operator that one obtains from Trotterization with a step $\tau_j$ is
\begin{align}
    e^{i T_j(H_j+L_j(\tau_j))}.
\end{align}
We split this operator (or equivalently raise to a real fractional power $\Delta/T_j$) into sufficiently small steps of length $\Delta$, such that one can use the Zassenhauss formula. That is, we first look at the following expression and use the Zassenhauss formula:
\begin{align}
    e^{-i\Delta H_j}e^{i\Delta(H_j+L_j(\tau_j))} = e^{i\Delta L_j(\tau_j)} e^{  \Delta^2 [H_j,L_j(\tau_j)]}\dots
\end{align}
Thus, applying $e^{i \Delta H_j}$ on the left, we finally obtain:
\begin{align}
    e^{i\Delta(H_j+L_j(\tau_j))} = e^{i\Delta H_j}e^{i\Delta L_j(\tau_j)} e^{  \Delta^2 [H_j,L_j(\tau_j)]}\dots.
\end{align}


With this, one has that 
\begin{align}
    \|e^{i\Delta(H_j+L_j(\tau_j))}\| \leq \left\| e^{i \Delta H_j}\right\|  O\left(e^{\Delta|\tau_j|^p\alpha_j/(p+1)!}\right) =  O\left(e^{\Delta|\tau_j|^p\alpha_j/(p+1)!}\right),
\end{align}
which finally gives us, for the full evolution length $T_j$, after exponentiating $e^{i\Delta(H_j+L_j(\tau_j))}$ by $T_j/\Delta$, a real number, we can bound the following norm through:
\begin{align}
   \left\| e^{iT_j (H_j+L_j(\tau_j))}\right\| = O\left(e^{ T_j|\tau_j|^p \alpha_j/(p+1)!}\right).
\end{align}
This means that,
\begin{align}
    \max_{z\in B_{\rho}}\left|\langle \psi_1 \vert \prod^{M}_{i} V_i S^{g/z}_{p}(z t_{i})  \vert \psi_2 \rangle \right| = O\left(e^{ \frac{g}{(p+1)!}\sum_j t_j \alpha_j(r t_j)^p }\right)
\end{align}
with $g=T_j/t_j$. Here, $r=O(1)$ is the radius of the smallest disc containing $B_{\rho}$. The relationship between $\rho$ and $r$ is
\begin{align} 
  \rho = r + \sqrt{r^2-1}.
\end{align}
Now, using \Cref{lem:Berns}, one knows the interpolation error is
\begin{align}
    \epsilon_{\rm cheb} &= O\left(\frac{e^{\frac{g}{(p+1)!}\sum_j \alpha_j t_j (r t_j)^p}}{r^n}\right) \cr 
    &= O\left(\frac{e^{\frac{g}{(p+1)!}\sum_j \alpha_j t_j (r t_j )^p}}{e^{n\log{r}}}\right).
\end{align}
Solving for $n$ one gets
\begin{align}
   n=\frac{1}{\log{r}} O\left(\log{1/\epsilon_{\rm cheb}} + \frac{g}{(p+1)!}\sum_j\alpha_j t_j (r t_j )^p \right).
\end{align}

This is summarized in the following theorem:

\begin{theorem}\label{thm:Troter_Cheb_interp}
In order to get a error
\begin{align}
    \epsilon_{\rm cheb} = \left|\langle \psi_1 \vert \prod^{M}_{j} V_j e^{i T_j H_{j} } \vert \psi_2 \rangle - P_{n-1} f(0)\right|,
\end{align}
where $f(s)=\langle \psi_1 \vert \prod^{M}_{j} V_j S^{g/s}_{p,j}(s t_{j})  \vert \psi_2 \rangle$, $g=T_j/t_j$, and
\begin{align}
   P_{n-1}f(0)=\sum^n_{k=1} d_k f(s_k),
\end{align}
where $d_k(0) = \frac{1}{n}(-1)^{k+n/2} \tan\left(\frac{2k-1}{2n}\pi\right)$, and $s_i = \cos\left(\frac{2i-1}{2n} \pi\right)$, one needs
\begin{align*}
   n=\frac{1}{\log{r}} O\left(\log{1/\epsilon_{\rm cheb}} + \frac{g}{(p+1)!}\sum_j\alpha_j t_j (r t_j )^p \right).
\end{align*}
where
\begin{align*}
    \alpha_j = \sum_{(\upsilon,m)} \acomm\big(H_{j,\pi_{\Upsilon}(\Gamma)},\ldots,H_{j,\pi_{\upsilon}(m+1)},H_{j,\pi_{\upsilon}(m)}\big),
\end{align*}
and $m_j$ runs over all the non-commuting terms of the product-formula decomposition for $H_j$, and provided that
\begin{align}
     t_j \leq \pi.
\end{align}
\end{theorem}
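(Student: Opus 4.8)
The plan is to promote the real interpolation variable $s$ to a complex variable $z$ ranging over a Bernstein ellipse $B_\rho$, establish a uniform bound $\max_{z\in B_\rho}\abs{f(z)} \le C$ with $C = O\!\left(\exp\!\left(\tfrac{g}{(p+1)!}\sum_j \alpha_j t_j (r t_j)^p\right)\right)$, feed this into \Cref{lem:Berns}, and solve for the interpolant degree $n$. The evaluation $P_{n-1}f(0) = \sum_k d_k f(s_k)$ by the formula quoted from \cite{Rendon_extrapol} is then immediate, so the whole content of the theorem is the bound on $C$ and the subsequent algebra.

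For the analyticity step I would note that for $z \in B_\rho$ one has $\tau_j = z t_j$ with $\abs{\tau_j}\le r t_j$, where $r = (\rho+\rho^{-1})/2$ is the semimajor axis (the radius of the smallest disc containing $B_\rho$), so $\rho = r + \sqrt{r^2-1}$. Choosing $\rho$ — equivalently $r$ — small enough that every $r t_j$ meets the smallness hypotheses of \Cref{lem:H_error_bound} (in particular $\abs{\tau_j}\le 1/8$ and the summability condition, which is also what makes $\tilde H_j(\tau_j)=\log(S_{p,j}(\tau_j))/(i\tau_j)$ single-valued), the amplitude $f(z) = \langle\psi_1\vert\prod_j V_j \exp(i g t_j\,\tilde H_j(z t_j))\vert\psi_2\rangle$ is analytic on $B_\rho$; here I use that the explicit $1/z$ in the fractional power $S_{p,j}^{g/z}(z t_j)$ cancels against the $\tau_j = z t_j$ prefactor in the exponent, leaving only the analytic dependence carried by $\tilde H_j$.

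The norm bound is the step I expect to be the main obstacle. \Cref{lem:H_error_bound} gives $\norm{L_j(\tau_j)} = \norm{\tilde H_j(\tau_j)-H_j} = O\!\left(\abs{\tau_j}^p\alpha_j/(p+1)!\right)$. The difficulty is that $L_j$ is \emph{not} anti-Hermitian — it is the error of a complex-time product formula — so $e^{iT_j(H_j+L_j(\tau_j))}$ is not unitary and cannot simply be bounded by $1$. I would write $e^{iT_j(H_j+L_j(\tau_j))} = \left(e^{i\Delta(H_j+L_j(\tau_j))}\right)^{T_j/\Delta}$ with $T_j/\Delta$ real and $\Delta$ small enough that the Zassenhaus expansion $e^{i\Delta(H_j+L_j)} = e^{i\Delta H_j}e^{i\Delta L_j}e^{\Delta^2[H_j,L_j]}\cdots$ converges; the leading factor $e^{i\Delta H_j}$ is genuinely unitary and contributes $1$ to the norm, while the remaining factors contribute $1 + O(\Delta\norm{L_j}) + O(\Delta^2\norm{L_j}) + \cdots$. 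Multiplying the $T_j/\Delta$ per-substep bounds and letting $\Delta\to 0$ yields $\norm{e^{iT_j(H_j+L_j(\tau_j))}} = O\!\left(\exp(T_j\norm{L_j(\tau_j)})\right) = O\!\left(\exp(T_j\abs{\tau_j}^p\alpha_j/(p+1)!)\right)$. Since each $V_j$ is unitary and $\abs{\langle\psi_1\vert\cdot\vert\psi_2\rangle}$ is bounded by the operator norm of the enclosed operator, taking the product over $j$ and substituting $T_j = g\,t_j$ and $\abs{\tau_j}\le r t_j$ gives the claimed $C$. Care is needed to check that the sub-step splitting is compatible with the complex-power structure for every fixed $z\in B_\rho$ and that the Zassenhaus tail is controlled uniformly in $z$; one also trades off $r$ here, since a larger ellipse speeds the decay $\rho^{-n}$ but inflates $C$ through the $(r t_j)^p$ factor.

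Finally I would invoke \Cref{lem:Berns} to get $\epsilon_{\rm cheb} = \norm{f-P_{n-1}f}_\infty \le \tfrac{4C\rho^{-n}}{\rho-1}$, then use $\rho = r+\sqrt{r^2-1}\ge r$ to replace $\rho^{-n}$ by the weaker $r^{-n}$ and absorb $4/(\rho-1)$ into the $O(\cdot)$ (legitimate for $r$ bounded away from $1$), so $\epsilon_{\rm cheb} = O(C r^{-n})$. Setting this to the target error, taking logarithms, and solving for $n$ gives $n\log r = O\!\left(\log(1/\epsilon_{\rm cheb}) + \log C\right) = O\!\left(\log(1/\epsilon_{\rm cheb}) + \tfrac{g}{(p+1)!}\sum_j\alpha_j t_j (r t_j)^p\right)$, which is the stated bound; using $1/\log r$ in place of $1/\log\rho$ only enlarges the sufficient $n$, hence is safe, and the residual hypothesis $t_j\le\pi$ is what remains of the single-valuedness requirement once the final evaluation is restricted to $z\in[-1,1]$. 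This completes the proof modulo the routine verification of the constants.
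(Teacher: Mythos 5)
Your proposal follows essentially the same route as the paper: promote $s$ to $z\in B_\rho$, use \Cref{lem:H_error_bound} to control $L_j(\tau_j)=\tilde H_j(\tau_j)-H_j$, bound $\norm{e^{iT_j(H_j+L_j)}}=O\bigl(e^{T_j\abs{\tau_j}^p\alpha_j/(p+1)!}\bigr)$ via the $\Delta$-step Zassenhaus splitting, feed the resulting $C$ into \Cref{lem:Berns}, and solve for $n$ with $\rho^{-n}$ weakened to $r^{-n}$. Your explicit remarks on the non-unitarity of $e^{iT_j(H_j+L_j)}$, the cancellation of the $1/z$ power against $\tau_j=zt_j$, and uniformity of the Zassenhaus tail in $z$ are the same points the paper treats (more tersely), so the argument is correct and matches the paper's proof.
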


\section{Combined Results\label{sec:combined_results}}

In this work, a distinction between algorithmic and statistical error is made. We call statistical error, $\varepsilon$, the error that comes from estimating amplitudes on the quantum computer. We call algorithmic error, $\epsilon$, the error stemming from the use of finite interpolations, assuming that the underlying amplitudes are estimated exactly. 
In turn, from triangle inequality, we know that the algorithmic error is
\begin{align}
    \epsilon \leq \epsilon_{\rm cheb} + \epsilon_{\rm sinc}.
\end{align}
For convenience, we set $\epsilon_{\rm cheb} = \epsilon_{\rm sinc} = \epsilon/2$. Now, I bring to your attention that the number of queries required for Trotter interpolation which is proportional to
\begin{align}
    \tilde{O}\left(\frac{1}{\varepsilon} 5^{p} M\sum^n_i q \frac{g}{|s_i|}\right),
\end{align}
where $s_i$ is defined in \cref{eq:cheb_node_def}, and the factor $5^p$ is comming from the number of stages in the Suzuki product formulas. The $1/\varepsilon$ factor is in account of the amplitude estimation algorithm needed to extract the underlying amplitudes \cite{2000quant.ph..5055B,IQAE}. One need only estimate the underlying amplitudes with statistical error $\tilde{O}(\varepsilon)$ to get a statistical error $\varepsilon$ on the final interpolant. This is due to the stability of the cardinal sine interpolation (see \Cref{app:sinc_stability}) and the stability of Chebyshev interpolation (see lemmas 4 and 5 of \cite{Rendon2021} ). The factor $g/|s_i|$ is the number of queries needed to achieve an evolution time  $O\left(T_j\right)$ for each Chebyshev sample point $s_i$, and $q$ are the number of integer queries around $\lceil g / |s_i| \rceil$ needed to interpolate to an evolution time $T_i$. 
From \cite{Rendon_extrapol}, one knows that $\sum_i 1/|s_i|=O(n\log{n})$, thus, finally, the number of  queries is proportional to
\begin{align}
    \tilde{O}\left(\frac{1}{\varepsilon} 5^{p} M q g n\right),
\end{align}

where I have omitted terms and factors that go like $O(\log\log(1/\epsilon))$, ${\rm polylog}(T)$, or $\log{n}$. Replacing $n$ from \cref{thm:Troter_Cheb_interp} we get
\begin{align}
    \tilde{O}\left(\frac{1}{\varepsilon} 5^{p} M q g \frac{1}{\log{r}} \left(\log{1/\epsilon_{\rm cheb}} + \frac{g}{(p+1)!}\sum_j\alpha_j t_j (r t_j )^p \right) \right),
\end{align}
We can simplify this using:
\begin{align}
    t_j  = T_j/g \leq \frac{\max_j{T_j}}{g},
\end{align}
we define $T_{\rm max} = \max_j T_j$ and $\alpha_{
\rm max}=\max_j \alpha_j$, and thus write the query cost as:
\begin{align}
    \tilde{O}\left(\frac{1}{\varepsilon} 5^{p} M q g \frac{1}{\log{r}} \left(\log{1/\epsilon_{\rm cheb}} + \frac{M \alpha_{\rm max} T_{\rm max} ( r T_{\rm max}/g )^p}{(p+1)!} \right) \right).
\end{align}
Here, we can optimize this cost function for $g$ and obtain that the optimal choice is:
\begin{align}
g = r\left( \frac{(p-1) M \alpha_{\text{max}} T_{\text{max}}^{p+1} }{(p+1)!\, \log (1/\epsilon_{\text{cheb}})} \right)^{\frac{1}{p}}.
\end{align}
Note that this formula does not work for $\alpha_{\rm max}=0$ when the Hamiltonians are fast-forwardable and the terms commute, this is because whe have the constraint that $g \in Z_{+}$. If we replace this back into the cost function, we get:
\begin{align}
    \tilde{O}  \left(\frac{5^p (M T_{\rm max})^{1+1/p} q r}{\varepsilon \log r} \left(\log \frac{1}{\epsilon_{\text{cheb}}}\right)^{1-1/p}  \left(   (p-1) \alpha_{\text{max}}   \right)^{\frac{1}{p}} \right).
\end{align}

Here, we now assume we use any $r>1$ for which $r/\log{r} = O(1)$. Here, I will now optimize the cost with respect to the order, $p$. If we optimize this cost with respect to $p$, we obtain, ignoring sub-leading orders/terms, a near optimal scaling:
\begin{align}
    p \sim \sqrt{\frac{\log{\frac{M T_{\rm max} }{\log{1/\epsilon_{\rm cheb}}}}}{\log{5}}},
\end{align}
where we are assuming $M T_{\rm max} > \log{1/\epsilon_{\rm cheb}}$. With this, the cost function becomes:
\begin{align}
    \tilde{O}  \left(\frac{ (M T_{\rm max}) q}{\varepsilon} \left(\log \frac{1}{\epsilon_{\text{cheb}}}\right) \left(\left( \sqrt{\frac{\log{\frac{M T_{\rm max} }{\log{1/\epsilon_{\rm cheb}}}}}{\log{5}}}\alpha_{\rm max}\right)^{ \sqrt{\frac{\log{5}}{\log{\frac{M T_{\rm max} }{\log{1/\epsilon_{\rm cheb}}}}}}}\right)  e^{2\sqrt{{\log(5)}\log{\frac{M T_{\rm max} }{\log{1/\epsilon_{\rm cheb}}}}}}  \right).
\end{align}
Ignoring any factors and terms that are $O\left({\rm subpoly}(M T_{\rm max} \alpha_{\rm max})\right)$ and $O\left({\rm subpoly}(\log{1/\epsilon_{\rm cheb}} )\right)$:
\begin{align}
    \tilde{O}  \left(\frac{ (M T_{\rm max}) q}{\varepsilon} \log \frac{1}{\epsilon_{\text{cheb}}} \alpha_{\rm max}^{1/p}  \right).
\end{align}
We recall that $q=O(\log \left(1/\epsilon_{\rm sinc}\right) )$ from \cref{thm:q}. If we set both target errors equal $\epsilon_{\rm sinc}=\epsilon_{\rm cheb}$ = , we have:
\begin{align}
    \tilde{O}  \left(\frac{ (M T_{\rm max})\alpha_{\rm max}^{1/p}}{\varepsilon} \log^2 \frac{1}{\epsilon}  \right),
\end{align}
where $\epsilon \leq \epsilon_{\rm interp} + \epsilon_{\rm cheb}$. If we enforce the constraint that $g \geq 1$ we obtain:
\begin{align}
    \tilde{O}  \left( \min\left( \frac{ (M T_{\rm max})\alpha_{\rm max}^{1/p}}{\varepsilon} \log^2 \frac{1}{\epsilon},\frac{M}{\varepsilon}\right) \right).
\end{align}
We also note that wit this choices the number of Chebyshev nodes for interpolation are:
\begin{align}
   n=\tilde{O}\left(\log{1/\epsilon}  \right).
\end{align}
We can summarize these results in the following theorem:

\begin{theorem}\label{thm:Troter_Cheb_interp_altogether}
We can estimate
\begin{align*}
    \langle \psi_1 \vert \prod^{M}_{j} V_j e^{i T_j H_{j} } \vert \psi_2 \rangle,
\end{align*}
with an algorithmic error of at most $\epsilon$ and a statistical error of at most $\varepsilon$, with a query cost of
\begin{align*}
    \tilde{O}  \left( \min\left( \frac{ (M T_{\rm max})\alpha_{\rm max}^{1/p}}{\varepsilon} \log^2 \frac{1}{\epsilon},\frac{M}{\varepsilon}\right) \right),
\end{align*}
and maximum circuit depth going like:
\begin{align*}
   \tilde{O}  \left( \min\left( \frac{ (M T_{\rm max})\alpha_{\rm max}^{1/p}}{\varepsilon} \log \frac{1}{\epsilon},\frac{M}{\varepsilon}\right) \right),
\end{align*}
using the following interpolation: 
\begin{align*}
     \sum^n_{k=1} d_k \frac{\sum^{q}_{o=-q}  \sinc\left(r-o\right) w(o)  \langle \psi_1 \vert \prod^{M}_{j} V_j S^{o+m_k}_{p,j}(s_k t_{j})  \vert \psi_2 \rangle }{w(r_k)},
\end{align*}
where $w(t) = \frac{1}{\sigma \sqrt{2\pi}}\exp\left(-\frac{t^2}{2{\sigma^2}}\right)$, $\sigma=\sqrt{\frac{q+2}{\pi}}$, $m_k=\lfloor g/s_k \rfloor$, $r_k=g/s_k - \lfloor g/s_k \rfloor$, $g=T_j/t_j$, $s_i = \cos\left(\frac{2i-1}{2n} \pi\right)$, and $d_k = \frac{1}{n}(-1)^{k+n/2} \tan\left(\frac{2k-1}{2n}\pi\right)$. Here, 
\begin{align*}
    \alpha_j = \sum_{(\upsilon,m)} \acomm\big(H_{j,\pi_{\Upsilon}(\Gamma)},\ldots,H_{j,\pi_{\upsilon}(m+1)},H_{j,\pi_{\upsilon}(m)}\big),
\end{align*}
$T_{\rm max} = \max_j T_j$ and $\alpha_{\rm max} = \max_j \alpha_j$. This is, while choosing $n  = \tilde{O}\left(\log{\frac{1}{\epsilon}}\right)$, $q= O\left(\log{\frac{1}{\epsilon}} \right)$, 
\begin{align*}
    g = \Theta\left(\max \left(\left(  \frac{M \alpha_{\text{max}} T_{\text{max}}^{p+1} }{p!\, \log (1/\epsilon)} \right)^{\frac{1}{p}},1\right)\right),
\end{align*}
and
\begin{align*}
    p =\Theta\left(\sqrt{\frac{\log{\frac{M T_{\rm max} }{\log{1/\epsilon}}}}{\log{5}}}\right),
\end{align*}
and $\sum_j t_j  \leq \pi/2$. Also, this is provided that $\vert \prod^{M}_{j} V_j S^{o+m_k}_{p,j}(s_k t_{j})  \vert \psi_2 \rangle$ are estimated with $\tilde{O}(\varepsilon)$ precision through amplitude estimation.
\end{theorem}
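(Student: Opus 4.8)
\section*{Proof proposal for \Cref{thm:Troter_Cheb_interp_altogether}}

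The plan is to obtain \Cref{thm:Troter_Cheb_interp_altogether} by composing the two interpolation schemes of the previous sections and then carrying out rigorously the resource optimization sketched in this section. First I would assemble the claimed estimator: by \Cref{thm:Troter_Cheb_interp} the target amplitude is approximated to error $\epsilon_{\rm cheb}$ by $\sum_{k=1}^n d_k f(s_k)$ with $f(s_k) = \langle\psi_1|\prod_j V_j S_{p,j}^{g/s_k}(s_k t_j)|\psi_2\rangle$, and by \Cref{thm:q} each (generally non-integer) power $f(s_k)$ is in turn approximated, to error $\tilde\varepsilon_{\rm interp}$, by the windowed cardinal-sine sum over the integer powers $o + m_k$, $o\in\{-q,\dots,q\}$, with $m_k = \lfloor g/s_k\rfloor$ and $r_k = g/s_k - \lfloor g/s_k\rfloor$. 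Substituting the second formula into the first produces exactly the nested expression in the statement; here the constraint $\sum_j t_j \le \pi/2$ is chosen so that it simultaneously enforces $\sum_j |t_j| \le \pi$ as required by \Cref{thm:q}, the padding hypothesis $\tfrac{1}{2\pi}\sum_j t_j \le 1/4$ used in \Cref{lem:spectral}, and the single-valuedness of the effective Hamiltonians $\tilde H_{p,j}$.

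Next I would bound the algorithmic error. A triangle inequality splits $\epsilon$ into the pure Chebyshev truncation error, controlled by \Cref{thm:Troter_Cheb_interp}, plus the aggregate error incurred by replacing each $f(s_k)$ by its cardinal-sine interpolant. The latter is at most $\big(\sum_{k=1}^n |d_k|\big)\max_k \tilde\varepsilon_{\rm interp}$, and since the Chebyshev weights $d_k$ obey a Lebesgue-constant bound $\sum_k |d_k| = O(\log n)$ (lemmas~4 and 5 of~\cite{Rendon2021}, cf.~\cite{Rendon_extrapol}), choosing $\tilde\varepsilon_{\rm interp} = \tilde O(\epsilon_{\rm sinc})$ and $\epsilon_{\rm cheb} = \epsilon_{\rm sinc} = \epsilon/2$ yields algorithmic error at most $\epsilon$; \Cref{thm:q} then fixes $q = O(\log(1/\epsilon))$ and $\sigma = \sqrt{(q+2)/\pi}$. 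The same stability argument, applied now to the statistical fluctuations of the amplitude-estimation outputs and combined with the stability of the cardinal-sine interpolation (\Cref{app:sinc_stability}), shows that estimating each underlying integer-power amplitude to precision $\tilde O(\varepsilon)$ propagates to a statistical error $\tilde O(\varepsilon)$ in the final interpolant — this is the origin of the $1/\varepsilon$ prefactor in the cost.

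Then I would count resources. A single coherent circuit realizing $\langle\psi_1|\prod_j V_j S_{p,j}^{o+m_k}(s_k t_j)|\psi_2\rangle$ uses $O\!\big(5^p M (g/|s_k| + q)\big)$ applications of the elementary exponentials, the $5^p$ counting Suzuki stages, and amplitude estimation to precision $\tilde O(\varepsilon)$ multiplies the number of such queries by $\tilde O(1/\varepsilon)$. Summing over the $2q+1$ shifts and the $n$ nodes, and using $\sum_k 1/|s_k| = O(n\log n)$ from~\cite{Rendon_extrapol}, gives a total query count $\tilde O\!\big(\tfrac1\varepsilon 5^p M q g n\big)$, whereas the maximum circuit depth is just the single deepest run, $\tilde O\!\big(\tfrac1\varepsilon 5^p M g / \min_k |s_k|\big) = \tilde O\!\big(\tfrac1\varepsilon 5^p M g n\big)$ (one factor of $q$ less, since depth is not summed over the shifts). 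Plugging in $n$ from \Cref{thm:Troter_Cheb_interp}, bounding $t_j \le T_{\rm max}/g$ and $\alpha_j \le \alpha_{\rm max}$, and assuming $r>1$ with $r/\log r = O(1)$, I would minimize over $g\in\mathbb{Z}_+$ — the unconstrained optimum being $g = \Theta\!\big((M\alpha_{\rm max}T_{\rm max}^{p+1}/(p!\log(1/\epsilon)))^{1/p}\big)$ — and then over $p$, where the trade-off between the growing $5^p$ and the shrinking $(MT_{\rm max})^{1/p}$-type factor gives $p = \Theta\!\big(\sqrt{\log(MT_{\rm max}/\log(1/\epsilon))/\log 5}\big)$. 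Discarding factors that are $\mathrm{subpoly}(MT_{\rm max}\alpha_{\rm max})$ or $\mathrm{subpoly}(\log(1/\epsilon))$ collapses the query bound to $\tilde O\!\big(MT_{\rm max}\alpha_{\rm max}^{1/p}\varepsilon^{-1}\log^2(1/\epsilon)\big)$ and the depth bound to the same with one fewer logarithmic factor; the inner $\min$ with $M/\varepsilon$ appears because whenever the optimal $g$ would fall below $1$ we must instead take $g = 1$, in which case $t_j = T_j \le \pi$, no Trotter interpolation is needed, and the cost is the bare $\tilde O(M/\varepsilon)$.

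The main obstacle is the error-propagation bookkeeping of the second paragraph: one must verify that the two interpolations compose cleanly, i.e.\ that the outer Chebyshev stage does not amplify the per-node cardinal-sine error — or the statistical noise — beyond the claimed logarithmic factor, which is exactly where the stability/Lebesgue-constant estimates of~\cite{Rendon_extrapol,Rendon2021} together with \Cref{app:sinc_stability} must be invoked with care. A secondary point requiring attention is the optimization over $p$: one should confirm that the terms suppressed as sub-leading (in particular the $(\log(1/\epsilon))^{\pm 1/p}$ and $(p-1)^{1/p}$ factors, and the exponent $1/p$ attached to $\alpha_{\rm max}$) are genuinely absorbed into the stated asymptotics for the chosen $p$.
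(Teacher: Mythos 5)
Your proposal follows essentially the same route as the paper: compose \Cref{thm:q} with \Cref{thm:Troter_Cheb_interp}, count queries as $\tilde{O}\big(\tfrac{1}{\varepsilon}5^{p}Mqgn\big)$ using $\sum_k 1/|s_k|=O(n\log n)$ and the stability of both interpolations, substitute $n$ from \Cref{thm:Troter_Cheb_interp}, optimize first over $g$ and then over $p$, and recover the $\min(\cdot,M/\varepsilon)$ from the constraint $g\geq 1$. Your treatment is in fact slightly more explicit than the paper's on two points it leaves implicit — the Lebesgue-constant aggregation of the per-node sinc errors and the derivation of the maximum depth as the deepest single circuit (dropping the factor $q$) — but the argument and conclusions coincide.
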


Now, we look at the case $M=1$:

\begin{corollary}

There exists a quantum algorithm for Hamiltonian evolution whose query cost scales like $\tilde{O}\left(\min\left(T\alpha^{1/p}\log^2 (1/\epsilon)/\varepsilon,1\right)\right)$ and a maximum query depth that scales like $\tilde{O}\left(\max\left(T \alpha^{1/p}\log(1/\epsilon)/\varepsilon,1\right)\right)$, while only using two extra ancillar qubit. This for some $p=\Theta\left(\sqrt{\frac{\log{\frac{ T }{\log{1/\epsilon}}}}{\log{5}}}\right)$.
\end{corollary}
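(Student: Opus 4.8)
The plan is to read the corollary off \Cref{thm:Troter_Cheb_interp_altogether} by specializing to $M=1$, and then to audit the quantum resources so as to pin down the qubit count and the series depth.

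First I would set $M=1$ in \Cref{thm:Troter_Cheb_interp_altogether}: there is a single Hamiltonian $H$ and a single evolution time $T$, so $T_{\rm max}=T$ and $\alpha_{\rm max}=\alpha$, while the frame unitary $V_1$ and the overlap states $\vert\psi_1\rangle,\vert\psi_2\rangle$ do not enter the cost. Substituting $M=1$ into the query cost $\tilde O\!\left(\min\big((MT_{\rm max})\alpha_{\rm max}^{1/p}\log^2(1/\epsilon)/\varepsilon,\;M/\varepsilon\big)\right)$ and into the depth $\tilde O\!\left(\min\big((MT_{\rm max})\alpha_{\rm max}^{1/p}\log(1/\epsilon)/\varepsilon,\;M/\varepsilon\big)\right)$ immediately gives the advertised scalings, with $p=\Theta\!\big(\sqrt{\log(T/\log(1/\epsilon))/\log 5}\big)$ the $M=1$ instance of the optimal order and $n=\tilde O(\log(1/\epsilon))$ Chebyshev nodes. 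The one thing to watch is the side condition $\sum_j t_j\le\pi/2$, here just $t_1=T/g\le\pi/2$: when the cost-optimal $g$ would fall below $\lceil 2T/\pi\rceil$ we are forced to take $g=\Theta(\max(\cdot,1))$ as in the theorem, which is precisely the regime in which the second branch of the $\min$ is active.

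Next I would count ancillas. The estimator produced by \Cref{thm:Troter_Cheb_interp_altogether} is a fixed classical linear combination of the amplitudes $\langle\psi_1\vert S_{p,1}^{\,o+m_k}(s_k t_1)\vert\psi_2\rangle$, and each $S_{p,1}$ is a literal product of the elementary rotations $e^{iH_\gamma t}$ acting on the $n$ system qubits only --- no block-encoding dilation enters, which is the whole point of going through product formulas. Each amplitude is extracted by a Hadamard test (one control qubit, run twice for the real and imaginary parts) whose success probability is then estimated to additive accuracy $\tilde O(\varepsilon)$ by an ancilla-frugal amplitude-estimation routine (iterative or maximum-likelihood amplitude estimation), whose Grover-type reflections about the Hadamard-test ``good'' subspace and about the all-zero state can be realized with one further clean qubit. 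That is two ancillas beyond the $n$ system qubits, uniformly in $T$ and $\epsilon$; and one amplitude-estimation run is $\tilde O(1/\varepsilon)$ copies of the base product-formula circuit applied in series, which is the origin of the $1/\varepsilon$ factor in the depth bound.

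The part I expect to require the most care is not the asymptotics --- those are mechanical given \Cref{thm:Troter_Cheb_interp_altogether} and \Cref{thm:q} --- but checking that all the hypotheses hold simultaneously: that the $g$-optimization is compatible with $t_1\le\pi/2$; that estimating each underlying amplitude to $\tilde O(\varepsilon)$ really propagates to an $O(\varepsilon)$ statistical error on the final interpolant (this is exactly where one invokes the stability of cardinal-sine interpolation from \Cref{app:sinc_stability} and of Chebyshev interpolation, and it is why the $\tilde O$'s absorb logarithmic condition-number factors); and that the chosen ancilla-frugal amplitude-estimation primitive genuinely attains $\tilde O(1/\varepsilon)$ query/depth cost with only a single reflection qubit. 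Granting these, the corollary follows.
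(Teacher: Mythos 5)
Your proposal is correct and follows essentially the same route as the paper, which proves the corollary simply by setting $M=1$ in \Cref{thm:Troter_Cheb_interp_altogether} and attributing the two ancillas to the Hadamard test (real and imaginary parts of the amplitude) and to the amplitude-estimation routine, exactly as you do. The only wrinkle is the paper's own $\min$/$\max$ inconsistency between the theorem's depth bound and the corollary's statement; your substitution into the theorem's expressions is the natural reading and otherwise matches the paper's argument.
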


One extra qubit requirement is coming from the Hadamard test to block-encode both the real and imaginary parts of the complex amplitudes, and the other extra qubit is for the amplitude estimation algorithm.

\section{Discussion and outlook}

I have provided an alternative method for the calculation of fractional queries by using a cardinal sine interpolation in conjunction with amplitudes of integer power queries. The algorithmic error from the query interpolation alone converges exponentially with the number of amplitudes used. Cardinal sine interpolation also proves to be very stable and does not propagate uncertainty significantly when using confidence intervals or normally-distributed random variables. Provided also here is a method to interpolate Trotterized evolution to zero Trotter-step-size.

The scaling with all relevant parameters is almost optimal for static Hamiltonian evolution. The expected quadratic scaling $\log^2(1/\epsilon)$ of the method presented here is still asymptotically worse than the state-of-the-art $\log(1/\epsilon)$. Extending these kind of methods for time-dependent simulation is also of great importance, however, no attempts at this were made in this work. Quantum signal processing methods are not good at this either, however there are some new methods available using virtual clock registers and multi-product formulas~\cite{watkins2024time} which achieve ${\rm polylog}(1/\epsilon)$. More importantly, the algorithm obtained here has, for the first time, an exponentially small algorithmic error in terms of computational resources while having a quasi-linear cost-scaling with respect to the evolution time compared to similar methods using product formulas plus stable interpolation~(See for example \cite{watson2024exponentially}).

The results here, combined with low-weight Fourier expansion, Lemma 37, from \cite{van2020quantum} or Fourier series accelerators/mollifiers (\cite{Vandeven1991}) open up the possibility for efficient off-loading of a great extent of quantum signal processing into a classical post-processing (for the $M=O(1)$ case) for Hermitian operators. Such examples include Gibbs sampling ($f(H)=\exp(-\beta H)$), and matrix inversion $f(H) = H^{-1}$. This, while keeping competent cost scaling with respect to state-of-the-art methods. All that is left for the quantum computer is to estimate the matrix element amplitudes for unitary operators. If the amplitude of these operators with respect to two quantum states is expected to be small, one is better off estimating these coherently as there is a quadratic advantage lost if one estimates the amplitudes of unitaries incoherently and then added up. In this regard, quantum signal processing remains better as well.

\section{Acknowledgements}

I would like to acknowledge the useful comments by Rutuja Kshirsagar, Bibhas, Adhikari, and Taozhi Guo for their useful comments on the earlier versions. I would also like to thank James Watson and Jacob Watkins for pointing out an important error introduced in the second version of the manuscript. I would also like to thank Jacob Watkins for his generous input on the write-up of the overall manuscript.
\bibliographystyle{JHEP}
\bibliography{bibliography}

\appendix

\section{Uncertainty propagation\label{app:sinc_stability}}

In this section, I provide bounds on the uncertainty propagation to the cardinal sine interpolant coming from errors on integer queries estimates. First, I estimate how the uncertainty propagates when the estimates come in the form of confidence intervals. This is summarized in the following lemma:

\begin{lemma}\label{lem:uncert_prop}
    Given an error vector $\epsilon\in \mathbb{C}^{2q+1}$, where $\|\epsilon\|_{\infty}\leq \varepsilon_{\rm estimate}$. The error tolerance on the interpolated quantity is 
    \begin{align}
    \max_t\| \epsilon s(t) \|_1 \leq  \varepsilon_{\rm estimate} \left( 3 + \frac{2}{\pi} \log\left(2q\right) \right).
\end{align} 
Here, 
\begin{align}
     \|s(t)\|_1 &= \sum^{q}_{n=-q}\left|\sinc\left(t-n\right)\right|
\end{align}
\end{lemma}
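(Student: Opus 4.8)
The plan is to reduce the claim to a uniform bound on the discrete Lebesgue function of cardinal sine interpolation, $L(t):=\sum_{n=-q}^{q}\abs{\sinc(t-n)}=\|s(t)\|_1$, and then to estimate $L$ by an integral comparison. The first step is immediate: if $\|\epsilon\|_\infty\le\varepsilon_{\rm estimate}$, then by the triangle inequality
\begin{align}
\|\epsilon\,s(t)\|_1=\sum_{n=-q}^{q}\abs{\epsilon_n}\,\abs{\sinc(t-n)}\le\|\epsilon\|_\infty\,L(t)\le\varepsilon_{\rm estimate}\,L(t),
\end{align}
so the whole statement follows once I show $\max_{t}L(t)\le 3+\tfrac{2}{\pi}\log(2q)$.

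For that, I would first put $t$ into a canonical position. Writing $t=\ell+\tau$ with $\ell\in\mathbb{Z}$ the nearest integer and $\tau\in[0,\tfrac12]$ (using that $L$ is even in $t$ to restrict to $\tau\ge0$), and using $\sin\pi(t-n)=(-1)^{\ell-n}\sin\pi\tau$, one gets $\abs{\sinc(t-n)}=\frac{\abs{\sin\pi\tau}}{\pi\,\abs{\ell-n+\tau}}$, hence
\begin{align}
L(t)=\frac{\abs{\sin\pi\tau}}{\pi}\sum_{k\in W}\frac{1}{\abs{k+\tau}},\qquad W=\{\ell-q,\ldots,\ell+q\}.
\end{align}
Since $k\mapsto\frac{1}{\abs{k+\tau}}$ is unimodal with peak near $k=0$, a one-step exchange argument (sliding $W$ one unit toward $0$ deletes its farthest element and inserts a strictly closer one whenever $\ell\neq0$) shows that the symmetric block $\{-q,\ldots,q\}$ maximizes the sum, so it suffices to bound
\begin{align}
\frac{\abs{\sin\pi\tau}}{\pi}\left(\frac1\tau+\sum_{j=1}^{q}\Big(\frac{1}{j-\tau}+\frac{1}{j+\tau}\Big)\right),\qquad \tau\in(0,\tfrac12]
\end{align}
(the case $\tau=0$ gives $L(t)=1$ trivially).

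I would then split this into three pieces: the central term contributes $\frac{\abs{\sin\pi\tau}}{\pi\tau}\le1$; the $j=1$ term contributes $\frac{\abs{\sin\pi\tau}}{\pi}\big(\frac{1}{1-\tau}+\frac{1}{1+\tau}\big)\le\frac{1}{\pi}(2+1)=\frac3\pi<1$ on $[0,\tfrac12]$; and for $j\ge2$, using $\tau\le\tfrac12$ and hence $\frac{1}{j-\tau}+\frac{1}{j+\tau}\le\frac{2}{j-1/2}$ together with the comparison $\sum_{j=2}^{q}\frac{1}{j-1/2}\le\int_{1}^{q}\frac{\mathrm{d}x}{x-1/2}=\log(2q-1)$, the tail is at most $\frac{2}{\pi}\log(2q-1)\le\frac{2}{\pi}\log(2q)$. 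Adding the three bounds gives $L(t)\le 2+\frac{2}{\pi}\log(2q)\le 3+\frac{2}{\pi}\log(2q)$, which combined with the first step completes the proof.

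The only step that needs real care is the window-centering claim — that among all blocks of $2q+1$ consecutive integers the symmetric one maximizes $\sum_k\frac{1}{\abs{k+\tau}}$ — and this is the part I would write out in full; everything else is a routine monotone-sum-versus-integral estimate. It is worth noting that the resulting $\Theta(\log q)$ growth is exactly the classical logarithmic growth of the Lebesgue constant of the cardinal series, which is precisely what makes the interpolation well-conditioned and underlies the stability claim used in \Cref{sec:combined_results}.
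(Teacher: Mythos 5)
Your proof is correct, and its skeleton matches the paper's: H\"older (triangle) inequality reduces everything to a uniform bound on the Lebesgue-type sum $\sum_{n=-q}^{q}\abs{\sinc(t-n)}$, which is then shown to be at most $3+\tfrac{2}{\pi}\log(2q)$ by an integral comparison. Where you differ is in how that sum is handled. You use the exact factorization $\abs{\sinc(t-n)}=\abs{\sin\pi\tau}/(\pi\abs{\ell-n+\tau})$ and then a window-centering exchange argument to reduce to the symmetric block $\{-q,\dots,q\}$, after which a three-way split gives the constant $2$ before loosening to $3$. The paper avoids the centering step entirely: it bounds the three sinc values nearest to $t$ by $1$ (this is where its ``$3$'' comes from), bounds every remaining term by $\tfrac{1}{\pi\abs{n-\tau}}$, and then simply replaces each one-sided sum by its worst-case (edge-shifted) value $\tfrac1\pi\sum_{n=2}^{2q}\tfrac1n$, which is legitimate since all terms are positive, before the Riemann-sum/integral comparison. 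So your exchange argument --- the one step you flag as needing real care --- is the only genuinely new ingredient, and it buys a marginally tighter intermediate constant but is not needed for the stated bound; conversely it is correct as sketched (sliding the window one unit toward the origin removes the farthest point and inserts a closer one, so the symmetric window maximizes the sum), and your reduction to $\tau\in[0,\tfrac12]$ is fine provided you note that negative $\tau$ is handled by the substitution $k\to-k$, $\tau\to-\tau$ rather than by evenness of $L$ alone.
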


\begin{proof}
Using H\"older's inequality, one obtains that the error tolerance on the interpolated quantity is
\begin{align}
    \max_t\| \epsilon s(t) \|_1 \leq \|\epsilon\|_{\infty} \max_t\|s(t)\|_{1}.
\end{align}
Now, I would like to upper-bound $\max_t\|s(t)\|_1$. First note
\begin{align}
    \max_t\|s(t)\|_1 &= \sum^{q}_{n=-q}\left|\sinc\left(\tau-n\right)\right|.
\end{align}
With this, I note that one can split the sum into two parts
\begin{align}
    \max_\tau\|s(\tau)\|_1&\leq  \sum^{q}_{n=\lfloor \tau \rfloor+1}\left|\sinc\left(n-\tau\right)\right| + \sum^{\lfloor\tau\rfloor}_{n=-q}\left|\sinc\left(n-\tau\right)\right|
\end{align}
We now upper-bound the terms $|\sinc(\lfloor \tau \rfloor + 1 - \tau)|$, $|\sinc(\lfloor \tau \rfloor - \tau)|$, and $|\sinc(\lfloor \tau \rfloor - 1 - \tau)|$ with $1$, obtaining
\begin{align}
    \max_\tau\|s(\tau)\|_1&\leq 3 + \sum^{q}_{n=\lfloor \tau \rfloor+2}\left|\sinc\left(n-\tau\right)\right| + \sum^{\lfloor\tau\rfloor-2}_{n=-q}\left|\sinc\left(n-\tau\right)\right|\cr
\end{align}
We can now upper bound the summands with
\begin{align}
   \max_\tau\|s(\tau )\|_1 &\leq 3 + \frac{1}{\pi}\sum^{q}_{n=\lfloor \tau \rfloor+2}\abs{\frac{1}{n-\tau}} + \frac{1}{\pi}\sum^{\lfloor\tau\rfloor-2}_{n=-q}\abs{\frac{1}{n-\tau}} \cr 
    &\leq 3 + \frac{1}{\pi}\sum^{q}_{n=-q+2}\abs{\frac{1}{n+q}} + \frac{1}{\pi}\sum^{q-2}_{n=-q}\abs{\frac{1}{n-q}} \cr 
    &= 3 + \frac{2}{\pi}\sum^{2q}_{n=2}\frac{1}{n} .    
\end{align}
Following that, I use right-Riemann sum approximation of an integral to upper bound the last term, finally getting
\begin{align}
    \max_\tau\|s(\tau)\|_1&\leq 3 +  \frac{2}{\pi}\int^{2q}_{1}\frac{1}{x} \mathrm{d}x \cr
    &\leq 3 + \frac{2}{\pi} \log\left(2q\right).
\end{align}
\end{proof}

Now, if I were to estimate the amplitudes for integer queries sampling a Gaussian distribution, the variance on the interpolated quantity would be

\begin{lemma}\label{lem:uncert_prop_gauss}
    Suppose one has $2q+1$ independent random variables $X_n \sim \mathcal{N}(\mu_n,\sigma_n^2)$, where $n\in\{-q,-q+1,\dots,q\}$. One has that the interpolating quantity is the random variable
    \begin{align*}
        X_{\rm interp}(t) =  \sum^q_{n=-q} \sinc \left(t-n.\right)X_n 
    \end{align*}
    For such, the variance $\sigma^2(t)$ is calculated through
    \begin{align*}
        \sigma(t)^2 = s^T(t) C s(t)
 \end{align*}
    where $C = {\rm diag} (\sigma^2_{-q},\sigma_{-q+1}^2,\dots,\sigma_q^2)$ and $\left(s(t)\right)_n = \sinc\left(t-n\right)$ and it can be bounded through
 \begin{align*}
     \sigma(t)^2 \leq \left(3+\frac{4}{\pi^2}\right) \max_n (\sigma_n^2).
 \end{align*}
\end{lemma}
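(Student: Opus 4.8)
The plan is to reduce the statement to a deterministic inequality about $\sum_{n}\sinc(t-n)^2$ and then bound that sum by the same splitting trick already used in the proof of \Cref{lem:uncert_prop}. First I would observe that, since the $X_n$ are independent, all cross terms in the variance vanish, so $\sigma(t)^2 = \mathrm{Var}\!\left(\sum_{n=-q}^{q}\sinc(t-n)X_n\right) = \sum_{n=-q}^{q}\sinc(t-n)^2\,\sigma_n^2$. Writing $\left(s(t)\right)_n=\sinc(t-n)$ and $C=\mathrm{diag}(\sigma_{-q}^2,\dots,\sigma_q^2)$, this is exactly $s^T(t)\,C\,s(t)$ — no conjugation is needed because the $\sinc$ weights are real. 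Pulling out the largest entry of $C$ gives $\sigma(t)^2 \leq \max_n(\sigma_n^2)\sum_{n=-q}^{q}\sinc(t-n)^2$, so it suffices to show that $\sum_{n=-q}^{q}\sinc(t-n)^2 \leq 3 + \tfrac{4}{\pi^2}$ uniformly in $t$.

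For this I would partition the index set around $\lfloor t\rfloor$ just as in \Cref{lem:uncert_prop}: the (at most) three summands with $n\in\{\lfloor t\rfloor-1,\lfloor t\rfloor,\lfloor t\rfloor+1\}$ are each bounded by $\sinc(0)^2=1$, contributing at most $3$. Every remaining index satisfies $|n-t|\geq 1$, so I would apply $|\sinc(x)|\leq \tfrac{1}{\pi|x|}$ to get $\sinc(t-n)^2\leq \tfrac{1}{\pi^2(n-t)^2}$, and then re-index the two tails so that the $j$-th remaining term on either side obeys $|n-t|>j$ with $j\geq 1$. The two tails are then dominated by $\tfrac{2}{\pi^2}\sum_{j\geq 1}\tfrac{1}{j^2}$, and I would close the estimate with the integral comparison $\sum_{j\geq 1}\tfrac{1}{j^2}\leq 1+\int_1^\infty x^{-2}\,\mathrm{d}x=2$. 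Summing, $\sum_{n=-q}^{q}\sinc(t-n)^2\leq 3+\tfrac{4}{\pi^2}$, which yields the claimed bound $\sigma(t)^2\leq\left(3+\tfrac{4}{\pi^2}\right)\max_n(\sigma_n^2)$.

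This argument is essentially routine, and there is no real obstacle; the only point needing a little care is the book-keeping of which integers lie ``near'' $t$ and the re-indexing that produces the clean $\sum_j j^{-2}$ tail, so that the constant is genuinely independent of both $t$ and $q$ (the finite truncation only makes the sum smaller). I would add a remark that a much sharper constant is in fact available — indeed $\sum_{n=-\infty}^{\infty}\sinc(t-n)^2=1$ by the reproducing-kernel/Parseval property of the cardinal-sine basis — but the crude bound above is all that is needed in \Cref{sec:combined_results} and keeps the proof parallel to \Cref{lem:uncert_prop}.
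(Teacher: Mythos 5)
Your proposal is correct and follows essentially the same route as the paper: reduce to the deterministic bound $\sum_{n=-q}^{q}\sinc^2(t-n)\leq 3+\tfrac{4}{\pi^2}$ via pulling out $\max_n\sigma_n^2$, bound the (at most) three near-integer terms by $1$, and control the tails with $\abs{\sinc(x)}\leq \tfrac{1}{\pi\abs{x}}$ and a $\sum_j j^{-2}$ comparison, exactly as in the paper's proof (which merely phrases the reduction through the infinite, $1$-periodic sum restricted to $\tau\in[-1,1]$ instead of indexing around $\lfloor t\rfloor$). Your side remark that the infinite sum actually equals $1$ by Parseval, so a sharper constant is available, is also correct.
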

\begin{proof}
\begin{align}
    \sigma(t)^2 &= s^{T}(t) C_x s(t) \cr
    &= \| \sigma^2 s^2 \|_1 \leq \| \sigma^2 \|_\infty \| s^2 \|_1 = \| \sigma^2 \|_\infty \| s \|^2
\end{align}

\begin{align}
    \|s\|^2&= \sum^q_{n=-q}\left|\sinc\left(t-n\right)\right|^2 \cr 
    &\leq\sum^{\infty}_{n=-\infty}\left|\sinc\left(t-n\right)\right|^2 \cr 
    &\leq\max_{\tau\in[-1,1]}\sum^{\infty}_{n=-\infty}\left|\sinc\left(\tau-n\right)\right|^2 \cr 
    &\leq \max_{\tau\in[-1,1]}\left(\sinc^2\left(\tau\right) + 2\sum^{\infty}_{n=1}\left|\sinc\left(n-\tau\right)\right|^2\right)\cr 
    &\leq \max_{\tau\in[-1,1]}\left(\left(\sinc^2\left(\tau\right)+2\sinc^2(\tau-1)\right)+\frac{2}{\pi^2}\sum^{\infty}_{n=2}\frac{1}{(n-\tau)^2} \right) \cr
    &\leq 3+\frac{4}{\pi^2}   
\end{align}
\end{proof}

\end{document}